\title{Parameterized Enumeration with Ordering\thanks{%
Supported by a Campus France/DAAD Procope grant, Campus France Projet No~28292TE, DAAD Projekt-ID 55892324.}}
\titlerunning{Parameterized Enumeration with Ordering}
\author{Nadia~Creignou\inst{1} \and Ra\"\i da~Ktari\inst{1} \and Arne~Meier\inst{2} \and Julian-Steffen~M\"uller\inst{2} \and Frédéric~Olive\inst{1} \and Heribert~Vollmer\inst{2}}
\institute{Aix-Marseille Université, CNRS LIF UMR 7279 \\ 
\texttt{$\{$Nadia.Creignou, Frederic.Olive, Raida.Ktari$\}$@lif.univ-mrs.fr} \and
Leibniz Universit\"at Hannover \\ 
\texttt{$\{$meier, mueller, vollmer$\}$@thi.uni-hannover.de}}
\authorrunning{N.~Creignou, R.~Ktari, A.~Meier, J.-S.~M\"uller, F.~Olive, and H.~Vollmer}
\begin{document}
\maketitle

\begin{abstract}
The classes $\delayFPT$ and $\totalFPT$ recently have been introduced into parameterized complexity in order to capture the notion of efficiently solvable parameterized enumeration problems. In this paper we focus on ordered enumeration and will show how to obtain $\delayFPT$ and $\totalFPT$ enumeration algorithms for several important problems. We propose a generic algorithmic strategy, combining well-known principles stemming from both parameterized algorithmics and enumeration, which shows that, under certain preconditions, the existence of a so-called neighbourhood function among the solutions implies the existence of a $\delayFPT$ algorithm which outputs all ordered solutions. In many cases, the cornerstone to obtain such a neighbourhood function is a $\totalFPT$ algorithm that outputs all minimal solutions. This strategy is formalized in the context of graph modification problems, and shown to be applicable to numerous other kinds of problems.
\end{abstract}

\section{Introduction}
Enumeration problems, the task of generating all solutions of a given computational problem, find applications, e.\,g., in query answering in databases and web search engines, bioinformatics and computational linguistics. From a complexity-theoretic viewpoint, the notions of Total-P, the class of those problems whose solutions can be output in time polynomial in the input length, and of Delay-P, the class of problems that can be output in such a way that the delay between two outputs is bounded by a polynomial, are of utmost importance \cite{JohnsonPY88}. 

For many enumeration problems it is of high interest that the output solutions obey some given ordering. In particular this is interesting as you then get the minimal solutions (e.\,g., the cost optimal solutions or more generally the most important solutions in some metric) at the beginning of the enumeration algorithm. 
Let us illustrate this with some examples. 

The question for which classes of propositional CNF formulas an enumeration of all satisfying solutions is possible in Delay-P was studied in \cite{CreignouH97}. In terms of the well-known Schaefer framework for classification of Boolean constraint satisfaction problems, it was shown that for the classes of Horn, anti-Horn, affine or bijunctive formulas, such an algorithm exists. For other classes of formulas the existence of a Delay-P algorithm implies $\PTime=\NP$. It is interesting to note that the result hinges on the self-reducibility of the propositional satisfiability problem. Since variables systematically are tried first with an assignment $0$ and then $1$, it can be observed that the given enumeration algorithms output all satisfying assignments in lexicographic order. 

In \cite{CreignouOS11} the enumeration of satisfying assignments for propositional formulas was studied under a different order, namely in non-decreasing weight, and it was shown that under this new requirement, enumeration with polynomial delay is only possible for Horn formulas and width-2 affine formulas (i.e., affine formulas with at most 2 literals per clause). One of the main ingredients of these algorithms is the use of a priority queue to ensure enumeration in order (as is the case already in \cite{JohnsonPY88}).

Recently, analogues of the just mentioned classes of enumeration problems in the context of parameterized complexity have been introduced under the names $\totalFPT$ and $\delayFPT$ \cite{cmmsv13}. The ``polynomial time'' in the former definitions here is simply replaced by a time-bound of the form $p(n)\cdot f(k)$, where $n$ denotes the input length and $k$ its parameter, $p$ is an arbitrary polynomial, and $f$ is an arbitrary recursive function. By this the notion of efficiency in the context of the parameterized world, i.e., fixed parameter tractability ($\FPT$), has been combined with the enumeration framework. A number of problems from propositional logic were studied in \cite{cmmsv13} and enumeration algorithms based on self-reducibility and on the technique of kernelization were developed. In particular it was shown that membership of an enumeration problem in $\delayFPT$ can be characterized by a certain tailored form of kernalizability, very much as in the context of usual decision problems.

In the present paper we study \emph{ordered enumeration} in the context of \emph{parameterized complexity}. We present a generic algorithm proving that parameterized ordered enumeration problems are in $\delayFPT$ as soon as a certain
\FPT-computable neighbourhood function on the solution set exists (see \Cref{thm:Nb in FPT -> allMP in delayFPT}).
Computation of the neighbourhood function often starts with an initial phase generating all minimal solutions in $\totalFPT$ (in arbitrary order). For many problems this can quite easily be achieved by a bounded search tree algorithm, e.\,g., the well-studied so-called \emph{graph modification problems} with finite forbidden pattern characterization.
In a second step, we prove that neighbourhood functions can always be obtained by an iterative application of a $\totalFPT$ algorithm to enumerate all minimal solutions (see \Cref{thm:main}).
We prove the wide scope of applications of our method by presenting $\FPT$-delay ordered enumeration algorithms for a large variety of problems, such as cluster editing, chordal completion, closest-string, and weak and strong backdoors.

\section{Preliminaries}

We start by defining parameterized enumeration problems with a specific ordering and their corresponding enumeration algorithms.
Most definitions in this section transfer those of \cite{JohnsonPY88,Schmidt09} from the context of enumeration and those of \cite{cmmsv13} from the context of parameterized enumeration to the context of parameterized \emph{ordered} enumeration. 

%

\begin{definition}\label[definition]{def:para-enum-pb}
	A \emph{parameterized enumeration problem with ordering} is a quadruple $E=(I, \kappa, \Sol, \preceq)$ such that the following holds:
	\begin{itemize}
		\item	$I$ is the set of \emph{instances}.
		\item $\kappa\colon I\rightarrow \N$ is the \emph{parameterization function};
			$\kappa$ is required to be polynomial-time computable.
		\item $\Sol$ is a function such that for all $x\in I$, 
				$\Sol(x)$ is a finite set, the set of \emph{solutions} of $x$.
			Further we write $\AllSol=\bigcup_{x\in I}\Sol(x)$.
		\item $\preceq$ is a quasiorder (or \emph{preorder}, i.\,e., a reflexive and transitive binary relation) on $\AllSol$.
	\end{itemize}
\end{definition}

We will write $I_{E}$, $\kappa_{E}$, etc.\ to denote that we talk about instance set, parameterization function, etc.\ of problem $E$.

\begin{definition}
	Let $E=(I,\kappa,\Sol,\preceq)$ be a parameterized enumeration problem with ordering. Then an algorithm $\calA$ is an \emph{enumeration algorithm} if the following holds:
	\begin{itemize}
		\item For every $x\in I$, $\calA(x)$ terminates after a finite number of steps.
		\item For every $x \in I$, $\calA(x)$ outputs exactly the elements of $\Sol(x)$ without duplicates.
		\item For every $x\in I$ and $y,z\in\Sol(x)$, if $y \preceq z$ then $\calA(x)$ outputs solution $y$ before solution $z$.
	\end{itemize}
\end{definition}

Before we define complexity classes for parameterized enumeration, we need the notion of delay for enumeration algorithms.
 
\begin{definition}[Delay]
 Let $E=(I,\kappa,\Sol, \preceq)$ be a parameterized enumeration problem with ordering and $\calA$ be an enumeration algorithm for $E$.
Let $x\in I$ be an instance. Then we say that the $i$-th delay of $\calA$ is the time between outputting the $i$-th and $(i+1)$-st solution in $\Sol(x)$. Further, we define the $0$-th delay to be the \emph{precalculation time} which is the time from the start of the computation to the first output statement. Analogously, the $n$-th delay, for $n=|\Sol(x)|$, is the \emph{postcalculation time} which is the time needed after the last output statement until $\mathcal A$ terminates.
\end{definition}

Now we are able to define two different complexity classes for parameterized enumeration following the notion of \cite{cmmsv13}.

\begin{definition}
Let $E=(I, \kappa, \Sol, \preceq)$ be a parameterized enumeration problem and $\calA$ an enumeration algorithm for $E$.
\begin{enumerate}
\item The algorithm $\calA$ is a $\totalFPT$ algorithm if there exist a computable function $t\colon \N\rightarrow \N$ and a polynomial $p$ such that for every instance $x\in\Sigma^*$, $\mathcal{A}$ outputs all solutions of $\Sol(x)$ in time at most $t(\kappa(x))\cdot p(|x|)$.
\item The algorithm $\calA$ is a $\delayFPT$ algorithm if there exist a computable function $t\colon \N\rightarrow \N$ and a polynomial $p$ such that for every $x\in\Sigma^*$, $\mathcal{A}$ outputs all solutions of $\Sol(x)$ with delay of at most $t(\kappa(x))\cdot p(|x|)$.
\end{enumerate}
\end{definition}

\begin{definition}
The class $\totalFPT$ (resp., $\delayFPT$)
is the class of all parameterized enumeration problems that admit a $\totalFPT$ (resp., $\delayFPT$)
enumeration algorithm.  
\end{definition}

Whenever we obtain $\delayFPT$ algorithms we will make inherent use of the concept of \emph{priority queues} to enumerate all solutions in the correct order and to avoid duplicates. We will follow the approach of Johnson et.~al.\ \cite{JohnsonPY88}. A priority queue $Q$ stores a potentially exponential number of elements. Let $x$ be an instance. The \emph{insert operation} of $Q$ requires $O(|x|\cdot\log |\Sol(x)|)$. This cost is also charged while avoiding the insertions of an element that already is in $Q$. The \emph{extract minimum operation} requires $O(|x|\cdot\log |\Sol(x)|)$ time, too. It is important, however, that the computation of the order between two elements takes at most $O(|x|)$ time. In this paper we will focus on solutions which are sets (of certain graph operations) and these sets will always be ordered by the quasiorder of non-decreasing cardinality. This quasiorder can easily be extended to a total order by ordering sets of equal cardinality lexicographically (fixing some order among the elements). 

As pointed out by Johnson et.\ al.\ this type of queue can be implemented with the help of standard balanced tree schemes.

\section{Graph Modification Problems}

Graph modifications problems have been studied for a long time in computational complexity theory. Already in the monograph by Garey and Johnson \cite{gj90}, among the graph-theoretic problems considered, many fall into this problem class. To the best of our knowledge, graph modification problems were studied in the context of parameterized complexity for the first time in \cite{cai96}. Given some graph property $\mathcal P$ and some graph $G$, we write $G\models\mathcal P$ if the graph $G$ obeys the property $\mathcal P$. A \emph{(graph) operation} for $G$ is either removing a vertex, or adding/removing an edge. A set of operations is \emph{consistent} if it does not contain an operation which adds and removes the same edge, and if it does not contain an operation removing a vertex and adding/removing edges incident to this vertex. Given such a set of consistent operations $S$, the graph obtained from $G$ in applying the operations in $S$ is denoted by $S(G)$. 

\begin{definition}
 Given some graph property $\mathcal P$, a graph $G$ and $k\in\N$, we say that $S$ is a \emph{solution for $(G,k)$} with respect to $\calP$ if the following three properties hold:
\begin{enumerate}
 \item $S$ is a consistent set operations.
 \item $|S|\leq k$.
 \item $S(G)\models\calP$.
\end{enumerate}
A solution $S$ is \emph{minimal} if for all $S'$ with $S'\subsetneq S$ it holds that $S'$ is not a solution.
\end{definition}  

Cai was interested in the following graph modification decision problem w.r.t.\ some given graph property $\calP$, $\MP$: given a graph $G$ and an integer $k$, does there exist a solution for $(G,k)$?
%
%
Here we will focus on the following two enumeration problems.

\enumproblem{\minMP}
{$(G,k)$, where $G=(V,E)$ is an undirected graph and $k\in\N$.}
{$k$}
{Generate all minimal solutions $S$ by non-decreasing size.}

We denote the corresponding problem to enumerate \emph{all} solutions (not only minimal ones) by non-decreasing size by $\allMP$.
%

In other words, the set of instances $I_{\MP}$ (in the sense of \Cref{def:para-enum-pb}) is the set of all pairs $x=(G,k)$, and $\kappa_{\MP}(G,k)=k$.
If the context is clear we omit the subscript $\mathcal P$ for the graph modification problem and simply write $\mathcal M$. We write $\Sol_\calM(x)$ for the function associating solutions to a given instance, and also $\AllSol_\calM$ for the set of all solutions of $\calM$. 

Cai studied the parameterized complexity of $\MP$ and obtained a positive result. In order to state it let us introduce some terminology. Given two graphs $G=(V,E)$ and $H=(V',E')$, we write $H\leq G$ if $H$ is an induced subgraph of $G$, i.\,e., $V'\subseteq V$ and $E'=E\cap (V'\times V')$.
Let $\mathcal F$ be a set of graphs and $\mathcal P$ be some graph property. We say, $\mathcal P$ \emph{has a finite forbidden set characterization} $\mathcal F$ if for any graph $G$ it holds that $G\models\mathcal P$ if and only if there exists no induced subgraph $H\leq G$ with $H\in\mathcal F$, and $|\mathcal F|<\infty$. 

\begin{proposition}[\cite{cai96}]
 The problem $\MP$ is in $\FPT$ for any property $\calP$ that has a finite forbidden set characterization. 
\end{proposition}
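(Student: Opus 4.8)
\noindent\textit{Proof (sketch).}
The plan is a bounded search tree argument of the kind standard for graph modification problems. Fix the finite forbidden set characterization $\mathcal F$ of $\calP$ and set $d:=\max_{H\in\mathcal F}|V(H)|$; since $|\mathcal F|<\infty$, the quantity $d$ is a constant depending only on $\calP$. The first building block is a polynomial-time recognition test for $\calP$: for a graph $G=(V,E)$, run over all vertex subsets $W\subseteq V$ with $|W|\leq d$ and test whether $W$ induces in $G$ a copy of some member of $\mathcal F$. Each such test costs $O(1)$ because $\mathcal F$ is fixed and its members have at most $d$ vertices, so the whole procedure runs in $O(|V|^{d})$ time and either certifies $G\models\calP$ or returns a forbidden induced subgraph $H\leq G$ with $H\in\mathcal F$.

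The recursive procedure receives $(G,k)$. If the recognition test certifies $G\models\calP$ it reports ``yes''; if $G\not\models\calP$ and $k=0$ it reports ``no'' for this branch. Otherwise let $H\leq G$ be the returned forbidden subgraph. The observation driving the branching is that every solution $S$ for $(G,k)$ must contain at least one operation that \emph{touches} $V(H)$, i.e.\ either a deletion of a vertex in $V(H)$, or an addition/removal of an edge with both endpoints in $V(H)$: if no operation of $S$ touched $V(H)$, all of $V(H)$ would survive in $S(G)$ with unchanged adjacencies, so the subgraph of $S(G)$ induced on $V(H)$ would still be $H\in\mathcal F$, contradicting $S(G)\models\calP$. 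The number of such candidate operations is at most $b:=d+d(d-1)/2\leq d^{2}$ (at most $d$ vertex deletions, plus one applicable edge modification per pair of vertices of $V(H)$). So we branch: for every candidate operation $o$ that is consistent with the operations chosen on the path so far, we apply $o$ to $G$ and recurse on $(o(G),k-1)$, reporting ``yes'' if and only if some branch does.

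For the complexity, $k$ strictly decreases along every branch, so the search tree has depth at most $k$ and branching degree at most $b$, hence at most $b^{k}$ nodes; the work at each node is dominated by the $O(|V|^{d})$ recognition test (applying an operation and checking consistency is $O(|V|^{2})$). Thus the algorithm runs in time $b^{k}\cdot |V|^{O(1)}$, which is of the form $f(k)\cdot\mathrm{poly}(|V|)$ with $f(k)=b^{k}$, so $\MP\in\FPT$.

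The part I expect to require the most care is not the time bound but the correctness equivalence ``$(G,k)$ has a solution if and only if some branch $(o(G),k-1)$ does'', once the consistency constraint is taken into account: when $o$ deletes a vertex $v$, the remaining operations must not act on edges incident to $v$, which is automatic since we continue in $G-v$; when $o$ toggles an edge $uv$, the remaining operations must neither delete $u$ or $v$ nor toggle $uv$ again, which one enforces by threading a small amount of extra state (a set of ``frozen'' vertices and edges) through the recursion. This is routine bookkeeping and leaves the branching degree --- the only quantity that matters for the $\FPT$ bound --- untouched; the crux is simply that a fixed forbidden pattern of bounded size can be destroyed in only a constant number of ways.
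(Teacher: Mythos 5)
Your proof is correct and follows essentially the same route as the paper: the paper attributes the result to Cai and explicitly describes his algorithm as a bounded search tree that branches on the constantly many operations destroying a found forbidden induced subgraph, which is exactly your argument (the paper's Lemma on $\minMP$ uses the same tree with branching width $c+c\cdot(c-1)$, matching your $d+d(d-1)/2$ up to the constant). The consistency bookkeeping you flag is indeed the only delicate point and is handled the same way implicitly in the paper's algorithm, which only branches on operations consistent with the set chosen so far.
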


The algorithm Cai developed for the decision problem is based on a bounded search tree, whose exhaustive examination provides all minimal solutions. Thus we get the following. 

\begin{lemma}\label[lemma]{prop:minMP}
 The problem $\minMP$ is in $\totalFPT$ for any property $\calP$ that has a finite forbidden set characterization. 
\end{lemma}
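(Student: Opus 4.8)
The plan is to build on Cai's bounded search tree algorithm (the one underlying the referenced Proposition) and argue that a careful exhaustive traversal of that tree not only decides $\MP$ but actually produces \emph{all} minimal solutions within the required $t(k)\cdot p(|x|)$ time bound. First I would recall the structure of the search tree: given an instance $(G,k)$ and a finite forbidden set $\mathcal F$, one repeatedly searches $G$ (or the current modified graph) for an induced subgraph $H\in\mathcal F$; if none exists the current partial modification set already yields a graph satisfying $\mathcal P$ and is recorded, and otherwise one branches over all ways of destroying the occurrence of $H$ by a single graph operation. Since each $H\in\mathcal F$ has bounded size, say at most $c=\max_{H\in\mathcal F}|V(H)|$, the number of operations that affect $H$ (deleting one of its $\le c$ vertices, or adding/removing one of the $\le \binom c2$ relevant edges) is bounded by a constant $d$ depending only on $\mathcal F$; each branch decreases the remaining budget by one, so the tree has depth at most $k$ and at most $d^k$ leaves.

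Next I would address the enumeration aspect. At each leaf where the current operation set $S$ satisfies $S(G)\models\mathcal P$ we have a solution (consistency can be maintained as an invariant along each root-to-leaf path, discarding branches that would violate it, and $|S|\le k$ holds by the depth bound). Collect all such $S$ into a set $\mathcal S$; this takes time $O(d^k)$ times the cost of finding a forbidden induced subgraph and performing an operation, which is polynomial in $|x|$ for fixed $\mathcal F$ — hence total time $t(k)\cdot p(|x|)$ with $t(k)=O(d^k)$. Then, to output exactly the \emph{minimal} solutions, I would post-process $\mathcal S$: remove any $S\in\mathcal S$ for which some $S'\in\mathcal S$ (equivalently, some $S'\subsetneq S$) is also a solution, i.e.\ keep only the $\subseteq$-minimal elements. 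Since $|\mathcal S|\le d^k$ and each pairwise comparison costs $O(k^2)$, this costs $O(d^{2k}k^2)$, still of the form $t(k)\cdot p(|x|)$. Crucially, every minimal solution is reached by some root-to-leaf path — because a minimal solution $S$ of size $\le k$ must contain, for each forbidden induced subgraph encountered along the way, at least one operation destroying it, so $S$ is consistent with some branching choice at every node — so $\mathcal S$ contains all minimal solutions, and after pruning we output precisely them. Finally, to respect the required order ``by non-decreasing size,'' I would sort the surviving minimal solutions by cardinality (breaking ties arbitrarily, e.g.\ lexicographically), which again costs only $\poly(d^k)$.

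The main obstacle I anticipate is the completeness argument: one must be sure that the search tree, as driven by destroying \emph{induced} forbidden subgraphs, does not miss any minimal solution — in particular that branching only over single operations that touch the current witness $H$ suffices, and that the ``record-on-no-witness'' leaves together with pruning capture exactly the minimal solutions without spurious omissions. A secondary subtlety is bookkeeping for consistency of operation sets (never adding and removing the same edge, never mixing a vertex deletion with edge operations at that vertex) so that every recorded $S$ is a legitimate consistent solution; this is handled by pruning inconsistent branches early, which only shrinks the tree. Everything else — the $d^k$ leaf bound, polynomial cost per node, and the final sort — is routine, so I would keep those parts brief and concentrate the write-up on why $\mathcal S$ is a superset of the minimal solutions closed enough that minimality-pruning yields exactly $\minMP$'s output.
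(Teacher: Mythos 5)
Your proposal follows essentially the same route as the paper: Cai's bounded search tree of depth $k$ and branching width at most $c+c(c-1)$ (where $c$ bounds the size of the patterns in $\mathcal F$), whose leaves yield all minimal solutions in time $O(d^k)\cdot\mathrm{poly}(|x|)$, followed by sorting (via a priority queue) to output them by non-decreasing size. The only difference is that you add an explicit $\subseteq$-minimality filtering step and spell out the completeness argument (every minimal solution must destroy each encountered forbidden induced subgraph, hence survives along some branch), both of which the paper leaves implicit; this is, if anything, more careful than the paper's own two-line justification.
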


\begin{proof}
Let $\mathcal P$ be the given graph property and $\mathcal F$ be its finite forbidden set characterization. The constructed bounded search tree algorithm is depicted in \Cref{alg:bst-gp}.

\begin{algorithm}
 \caption{Enumeration algorithm for $\minMP\in\totalFPT$} \label[algorithm]{alg:bst-gp}
 \LinesNumberedHidden
 \Input{$(G,k)$, where $G$ is an undirected graph and $k\in\N$}
 \ShowLn priority queue $Q\leftarrow\emptyset$ (ordered by size)\;
 \ShowLn {\ttfamily \textsc{Min-BST}($G,k,\emptyset,Q$)}\;
 \ShowLn \lWhile{$Q$ is not empty}{
 \textbf{output} head element of $Q$\;
 }
 \BlankLine\BlankLine
 \MinBST{Undirected graph $G$, integer $k\in\N$, set of operations $S$, queue $Q$}\\\setcounter{AlgoLine}{0}
 \ShowLn \lIf{$S(G)\models\calP$}{
   insert $S$ into $Q$ and \textbf{halt}\;
 }
 \ShowLn \lIf{$|S|=k$}{
   \textbf{halt}\;
  }
 \ShowLn search an induced subgraph $H\leq S(G)$ such that $H\in\mathcal F$\;
 \ShowLn \lForAll{graph operations $t$ consistent with $S$}{
    {\ttfamily \textsc{Min-BST}($G,k,S\cup\{t\},Q$)}\;
 }
\end{algorithm}

Since $\mathcal F$ is finite, the size of any $|H|$ in $\mathcal F$ is bounded by some $c\in\N$. The branching width of the search tree is bounded $c+c\cdot (c-1)$. The depth of the recursion is bounded by $k$ whence we get an overall tree size of $(c+c\cdot (c-1))^{k}$. The operations on the priority queue run in polynomial time which lead in combination with the polynomial time of the pattern search to overall $\totalFPT$.
\end{proof}

Observe that one cannot expect a similar result for \allMP. One can easily see that $\allMP\notin\totalFPT$ holds by a counting argument. Indeed the number of solutions is potentially too large to allow for a $\totalFPT$ algorithm. Enumerating all solutions by non-decreasing size can nevertheless be hopefully made in $\delayFPT$. This requires a more involved strategy that amounts to generate all solutions from an initial ``seed'' (denoted by $\seed$ below), as well as taking care of the order. The second point will be handled by a priority queue. For the first one we need some notion of operations which transform one solution into possible ``neighbour solutions'' (if such exist).

\begin{definition}
\label[definition]{def:nbf}
Let $\Mprob$ be some graph modification problem. A \emph{neighbourhood function} for $\Mprob$ is a (partial) function $\nbf{\Mprob}\colon I_{\Mprob}\times \left(\AllSol_{\Mprob}\cup\{\seed\}\right)\rightarrow\calP({\AllSol_{\Mprob}})$
such that the following holds:
\begin{enumerate}
\item For all $x=(G,k)\in I_{\Mprob}$ and $S\in\Sol_{\Mprob}(x)\cup\{\seed\}$, 
		$\nbf\Mprob(x,S)$ is defined.
\item\label{item:nbf init}	For all $x\in I_{\Mprob}$, 
		$\nbf\Mprob(x,\seed)=\emptyset$ if $\Sol_{\Mprob}(x)=\emptyset$, 
		and $\nbf\Mprob(x,\seed)$ is an arbitrary set of solutions otherwise.		
\item\label{item:nbf extend} For all $x\in I_{\Mprob}$ and $S\in\Sol_{\Mprob}(x)$, 
		if $S'\in\nbf\Mprob(x,S)$ then $|S|<|S'|$.  
\item\label{item:nbf cover} For all $x\in I_{\Mprob}$ and all $S\in\Sol_{\Mprob}(x)$, 
		there exist $p>0$ and $S_{1},\dots,S_{p}\in\Sol_{\Mprob}(x)$ 
			such that 
\begin{itemize}
\item $S_{1}\in\nbf{\Mprob}(x,\seed)$,
\item $S_{i+1}\in\nbf{\Mprob}(x,S_{i})$ for $1\leq i<p$, and 
\item $S_{p}=S$.
\end{itemize}
\end{enumerate}
Furthermore, we say that $\nbf{\Mprob}$ is $\FPT$-computable, when $\nbf{\Mprob}(x,S)$ is computable in time $f(k)\cdot\text{poly}(|x|)$ for any $x\in I_{\Mprob}$ and $S\in\Sol_{\Mprob}(x)$.
\end{definition}

Thus, a neighbourhood function for a problem $\Mprob$ is a function that in a first phase computes from scratch some initial set of solutions (see \Cref{def:nbf}(\ref{item:nbf init})). 
In many of our applications below, $\nbf\Mprob(x,\seed)$ will be the set of all minimal solutions for $x$.
In a second phase these solutions are iteratively enlarged
(see condition~(\ref{item:nbf extend})), 
where
condition~(\ref{item:nbf cover})
guarantees that we do not miss any solution, as we will see in the next theorem.

\begin{theorem}\label[theorem]{thm:Nb in FPT -> allMP in delayFPT}
Let $\calM$ be a graph modification problem. If $\calM$ admits a neighbourhood function $\nbf\calM$ that is \FPT-computable, then $\allM$ is in $\delayFPT$.
\end{theorem}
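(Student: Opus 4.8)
The plan is to run the classical priority-queue enumeration technique of Johnson, Papadimitriou and Yannakakis~\cite{JohnsonPY88}, now driven by the neighbourhood function $\nbf\calM$ rather than by self-reducibility. Fix an instance $x=(G,k)$ and put $m=|\Sol_\calM(x)|$. The enumeration algorithm keeps a priority queue $Q$ ordered by the total order that refines non-decreasing cardinality by lexicographic tie-breaking, as described in the preliminaries. It first computes $\nbf\calM(x,\seed)$ and inserts all its elements into $Q$; then it loops: extract the minimum $S$ of $Q$, output $S$, compute $\nbf\calM(x,S)$, and insert every element of $\nbf\calM(x,S)$ into $Q$, where an insertion of an element already present in $Q$ is silently discarded. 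The loop stops once $Q$ is empty. We may and do assume $\nbf\calM(x,S)\subseteq\Sol_\calM(x)$: otherwise, before inserting, we discard the returned objects that fail the polynomial-time membership test for $\Sol_\calM(x)$ (check consistency, check that the size is at most $k$, apply the operations, and search for a forbidden induced subgraph from the finite set $\mathcal F$), which affects neither correctness nor the running-time bounds.

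For correctness I would list the solutions of $x$ as $T_1\sqsubset\dots\sqsubset T_m$ in the chosen total order and show by induction on $j$ that the $j$-th element extracted from $Q$ is $T_j$, maintaining the invariant that $Q$ never contains a solution that has already been output. The invariant is kept because, by condition~(\ref{item:nbf extend}), every $S'\in\nbf\calM(x,T_j)$ has $|S'|>|T_j|\geq|T_i|$ for all $i\le j$, so a freshly inserted element differs from every solution output so far. For the inductive step, assume $T_1,\dots,T_j$ have been output in this order; by condition~(\ref{item:nbf cover}) there is a chain $T_{j+1}=S_p,S_{p-1},\dots,S_1$ of solutions of $x$ with $S_1\in\nbf\calM(x,\seed)$ and $S_{\ell+1}\in\nbf\calM(x,S_\ell)$, and by condition~(\ref{item:nbf extend}) its cardinalities strictly increase, so $S_1,\dots,S_{p-1}$ are solutions of $x$ strictly below $T_{j+1}$, i.e.\ they all lie in $\{T_1,\dots,T_j\}$. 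In particular $S_{p-1}$ has already been processed, so $T_{j+1}\in\nbf\calM(x,S_{p-1})$ was inserted into $Q$, and being distinct from $T_1,\dots,T_j$ it is still in $Q$. Any element of $Q$ strictly below $T_{j+1}$ would again be one of $T_1,\dots,T_j$, contradicting the invariant, so the current minimum of $Q$ is exactly $T_{j+1}$. The same argument shows that no solution is output twice: once $T_j$ has been extracted it is never re-inserted, since any $S$ with $T_j\in\nbf\calM(x,S)$ satisfies $|S|<|T_j|$, hence $S\sqsubset T_j$, hence $S$ is processed before $T_j$; and it shows that the loop runs exactly $m$ times, so $\calA$ is a correct enumeration algorithm.

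For the delay I would use the cost model for $Q$ from the preliminaries: a comparison of two solutions costs $O(|x|)$, and an insertion or an extract-minimum costs $O(|x|\cdot\log|\Sol_\calM(x)|)=\mathrm{poly}(|x|)$ since $|\Sol_\calM(x)|\le 2^{O(|x|^2)}$ (a solution is a subset of the at most $O(|x|^2)$ available graph operations). The precalculation computes $\nbf\calM(x,\seed)$ in time $f(k)\cdot\mathrm{poly}(|x|)$, which produces at most $f(k)\cdot\mathrm{poly}(|x|)$ elements, inserts them, and performs one extract-minimum; by condition~(\ref{item:nbf init}) the queue is nonempty exactly when $\Sol_\calM(x)\neq\emptyset$, so whenever there is anything to output, the first output occurs within an $\FPT$ delay. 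Between two consecutive outputs the algorithm performs one call to $\nbf\calM(x,S)$, at most $f(k)\cdot\mathrm{poly}(|x|)$ insertions and one extract-minimum, again $f(k)\cdot\mathrm{poly}(|x|)$ in total; after the last output $\nbf\calM(x,T_m)$ contains no solution of $x$ (such a solution would have cardinality $>|T_m|$), so $Q$ empties and the algorithm halts within the same bound. Hence the algorithm witnesses $\allM\in\delayFPT$.

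The delay estimates are routine once the $\FPT$-computability of $\nbf\calM$ and the standard priority-queue accounting of~\cite{JohnsonPY88} are in place. The step that needs care — and the only place where conditions~(\ref{item:nbf extend}) and~(\ref{item:nbf cover}) are genuinely used, hand in hand with the fact that the output order refines ordering by cardinality — is the correctness of the enumeration order: that the minimum of $Q$ is always the next solution in the order, and, relatedly, that an already-output solution is never reinserted so that no duplicates are produced.
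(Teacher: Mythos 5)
Your proof is correct and follows essentially the same route as the paper: the same seed-then-extract-and-expand priority-queue algorithm, with condition~(3) guaranteeing order and no duplicates, condition~(4) plus induction along the chain guaranteeing completeness, and the same \FPT\ accounting for the delay (your explicit induction that the $j$-th extraction is $T_j$ is just a more detailed version of what the paper asserts). The only blemish is in your optional normalization aside, where the filter's appeal to a finite forbidden set $\mathcal F$ uses a hypothesis the theorem does not make; but that step is unnecessary anyway, since the definition of a neighbourhood function is read as returning solutions of the given instance.
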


\begin{proof}
The algorithm which outputs all solutions in $\delayFPT$ is shown in \Cref{alg:delayfpt-withNBF}. 
By the definition of the priority queue (recall in particular that insertion of an element is done only if the element is not yet present in the queue) and by the fact that all elements of $\nbf\calM((G,k),S)$ are of bigger size than $S$ by \Cref{def:nbf}(\ref{item:nbf extend}), it is easily seen that \emph{the solutions are output in the right order} and that \emph{no solution is output twice}.

Besides, \emph{no solution is omitted}. Indeed, given $S\in\Sol_{\calM}(G,k)$ and $S_{1},\dots,S_{p}$ associated with $S$ by \Cref{def:nbf}(\ref{item:nbf cover}), we prove by induction that each $S_{i}$ is inserted in $Q$ during the run of the algorithm: For $i=1$, this proceeds from line~2 of the algorithm; for $i>1$, the solution $S_{i-1}$ is inserted in $Q$ by induction hypothesis and hence all elements of $\nbf{\calM}((G,k),S_{i-1})$, including $S_{i}$, are inserted in $Q$ (line~6). Thus, each $S_{i}$ is inserted in $Q$ and then output during the run. In particular, this holds for $S=S_{p}$.

Finally, we claim that \emph{\Cref{alg:delayfpt-withNBF} runs in $\delayFPT$}. Indeed, the delay between the output of two consecutive solutions is bounded by the time required to compute a neighbourhood of the form $\nbf{\calM}((G,k),\seed)$ or $\nbf{\calM}((G,k),S)$ and to insert all its elements in the priority queue. This is in $\FPT$ due to the assumption on $\nbf{\calM}$ being $\FPT$-computable and as there is only a single extraction and $\FPT$-many insertion operations on the queue.
%
\begin{algorithm}
\caption{$\delayFPT$ algorithm for $\allM$}\label[algorithm]{alg:delayfpt-withNBF}
\Input{$(G,k),$ where $G$ is an undirected graph and $k\in\N$.}
compute $\nbf\calM((G,k),\seed)$\;
insert all elements of $\nbf\calM((G,k),\seed)$ into the priority queue $Q$ (ordered by size)\;
\While{$Q$ is not empty}{
	\textbf{extract} the minimum solution $S$ of $Q$\;
	\textbf{output} $S$\;
	\textbf{insert} all elements of $\nbf\calM((G,k),S)$ into $Q$\; 
}
\end{algorithm}
\end{proof}

\begin{figure}
 \centering
\usetikzlibrary{arrows}
\begin{tikzpicture}[box/.style={rectangle,draw,black,text width=2.5cm,align=center},x=0.9cm]
 \node[box,text width=2cm] (BST) at (0,0) {$\nb{}((G,k),\seed)$};
 \node[box,text width=2cm] (pq) at (5,0) {priority queue};
 \node[box,dashed] (x) at (10,0) {output current solution $S$};
 
 \path[-triangle 45,draw,black] (BST) edge node [above] {initial} (pq);
 \path (BST) edge node [below] {solutions} (pq);
 \path[-triangle 45,draw,black] (pq) edge node [above] {extract} (x);
 \path (pq) edge node [below] {head} (x);
 \path[-triangle 45,draw,black] (x) edge [bend left,looseness=0.8] node [below] {insert $\nb{}((G,k),S)$} (pq);
\end{tikzpicture}
 \caption{Structure of \Cref{alg:delayfpt-withNBF}.}
\end{figure}


Let us illustrate the application of the theorem to a specific modification problem, namely \CE. A \emph{cluster} is a graph such that all its connected components are cliques. In order to transform (or modify) a graph $G$ we allow here only two kinds of operations: adding or removing an edge. Given a graph $G$ and a parameter $k$, the question is whether there exists a consistent set of operations of cardinality at most $k$ such that $S(G)$ is cluster.


Here we are interested in enumerating by non-decreasing cardinality either all such consistent sets of operations, \allCE, or only all (inclusion) minimal ones, \minCE.


\begin{lemma}\label[lemma]{lem:minCE-totalfpt}
 $\minCE\in\totalFPT$.
\end{lemma}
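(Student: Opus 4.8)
The plan is to recognise \minCE\ as (a restriction of) the minimal-solution modification problem $\minMP$ for the property $\calP=$ ``being a disjoint union of cliques'', and to note that $\calP$ has a finite forbidden set characterization, so that the bounded-search-tree strategy behind \Cref{prop:minMP} applies almost verbatim. The relevant fact is that a graph is a cluster graph if and only if it contains no induced path on three vertices $P_3$: an induced $P_3$ on $a,b,c$ with edges $ab,bc$ and non-edge $ac$ witnesses that ``adjacent or equal'' fails to be transitive, and conversely any failure of transitivity exhibits such an induced $P_3$. Hence $\calP$ has the finite forbidden set characterization $\mathcal F=\{P_3\}$, with a single pattern of constant size.

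First I would run the bounded-search-tree algorithm of \Cref{prop:minMP} (\Cref{alg:bst-gp}) specialised to $\mathcal F=\{P_3\}$ and to the restricted operation set of \CE\ (only edge additions and removals --- dropping vertex deletions only shrinks the branching). The recursion maintains a consistent set $S$ of edits: if $S(G)$ is a cluster it records $S$ and stops this branch; if $|S|=k$ it stops; otherwise it picks an induced $P_3$ on some $\{a,b,c\}$ (edges $ab,bc$, non-edge $ac$) and recurses on $S\cup\{\text{remove }ab\}$, $S\cup\{\text{remove }bc\}$ and $S\cup\{\text{add }ac\}$, discarding any branch that becomes inconsistent. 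The branching factor is at most $3$ and the depth at most $k$, so there are at most $3^{k}$ leaves, and the whole traversal --- including locating an induced $P_3$ at each node --- runs in time $3^{k}\cdot\text{poly}(|G|)$. Let $\mathcal C$ be the resulting set of candidate solutions, $|\mathcal C|\le 3^{k}$.

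Next I would check that $\mathcal C$ contains every minimal solution and then filter out the non-minimal ones. For a minimal solution $S$, consider a root-to-leaf path of the algorithm that stays inside $S$: at a node with current set $S'\subseteq S$ that is not yet a solution, the chosen forbidden pattern is an induced $P_3$ on some $\{a,b,c\}$, and since $S(G)$ is $P_3$-free while $S'(G)$ has this induced $P_3$, the set $S\setminus S'$ must contain an edit modifying an edge inside $\{a,b,c\}$; the only such edits valid for $S'(G)$ are remove $ab$, remove $bc$, add $ac$ --- exactly the three branches. Following such edits the algorithm reaches a prefix $S''\subseteq S$ with $S''(G)$ a cluster, and minimality of $S$ forces $S''=S$, so $S\in\mathcal C$. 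Since $\mathcal C$ may also contain non-minimal solutions (on the four-vertex path $a,b,c,d$, the branch ``remove $ab$'' then ``remove $bc$'' records $\{\text{remove }ab,\text{remove }bc\}$, a strict superset of the solution $\{\text{remove }bc\}$), I would add a filtering pass: for each $S\in\mathcal C$, test whether some $S'\subsetneq S$ is a solution --- at most $2^{|S|}\le 2^{k}$ subsets, each test in $\text{poly}(|G|)$ --- and keep $S$ only if none is. The survivors are exactly the minimal solutions; I then insert them into a priority queue ordered by non-decreasing cardinality (with a lexicographic tie-break, as in the preliminaries), which also suppresses duplicates, and output the queue.

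The total running time is then bounded by $3^{k}\cdot 2^{k}\cdot\text{poly}(|G|)$ plus the (polynomial) priority-queue overhead, i.e.\ of the form $f(k)\cdot\text{poly}(|G|)$, so $\minCE\in\totalFPT$. The one genuine subtlety, and the step I expect to need the most care, is that the raw bounded search tree does \emph{not} by itself output precisely the minimal solutions: it produces a superset, possibly with repetitions, so the minimality test and duplicate elimination are really required --- and they are affordable exactly because we are in the $\totalFPT$ regime, where an $f(k)\cdot\text{poly}$ time budget and only $f(k)$-many candidates are available.
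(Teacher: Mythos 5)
Your proof is correct and follows essentially the same route as the paper: the paper's proof of \Cref{lem:minCE-totalfpt} is a one-liner observing that cluster graphs have the finite forbidden set characterization $\{P_3\}$ and then invoking \Cref{prop:minMP}, whereas you additionally unfold that lemma's bounded-search-tree argument in the specialised setting. Your extra filtering pass to discard non-minimal candidates is a sensible (and arguably necessary) refinement that the paper's generic \Cref{alg:bst-gp} glosses over, but it does not change the approach.
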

\begin{proof}
The property for a graph to be a cluster has a finite forbidden set characterization (the forbidden pattern is a path of length 2). So, the lemma follows from \Cref{prop:minMP}.
\end{proof}


\begin{theorem}\label[theorem]{prop:allCE}
 $\allCE\in\delayFPT$.
\end{theorem}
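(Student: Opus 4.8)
The plan is to apply \Cref{thm:Nb in FPT -> allMP in delayFPT} to $\CE$, so the whole task reduces to exhibiting an $\FPT$-computable neighbourhood function $\nbf{\CE}$ for the cluster editing problem. For the initial phase I would set $\nbf{\CE}((G,k),\seed)$ to be the set of all inclusion-minimal solutions of $(G,k)$; by \Cref{lem:minCE-totalfpt} this set is computable in $\totalFPT$, hence in particular in time $f(k)\cdot\mathrm{poly}(|G|)$, and it is empty exactly when $(G,k)$ has no solution, so \Cref{def:nbf}(\ref{item:nbf init}) holds.

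For the extension phase, given a solution $S\in\Sol_{\CE}(G,k)$ with $|S|<k$, I would define $\nbf{\CE}((G,k),S)$ to contain, for every non-edge operation $t$ (adding or removing an edge) that is consistent with $S$ and whose target edge is not already touched by $S$, the set $S\cup\{t\}$ \emph{provided} $S\cup\{t\}$ is still a solution, i.e.\ $(S\cup\{t\})(G)$ is a cluster and $|S|+1\le k$. If $|S|=k$ we set $\nbf{\CE}((G,k),S)=\emptyset$. Each such $S\cup\{t\}$ has cardinality $|S|+1>|S|$, giving \Cref{def:nbf}(\ref{item:nbf extend}); there are only $O(|V|^2)$ candidate operations and testing the cluster property is polynomial, so $\nbf{\CE}$ is $\FPT$-computable (indeed polynomial-time here). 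The point to check carefully is the coverage condition \Cref{def:nbf}(\ref{item:nbf cover}): every solution $S$ must be reachable by a chain $S_1\in\nbf{\CE}((G,k),\seed)$, $S_{i+1}\in\nbf{\CE}((G,k),S_i)$, ending at $S$.

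The main obstacle, and the heart of the argument, is precisely this coverage claim: I must show that every consistent solution $S$ (not just the minimal ones) can be grown, one edge operation at a time, starting from some minimal solution $S_1\subseteq S$, so that \emph{every} intermediate set along the way is itself a valid cluster-editing solution. The key combinatorial fact is that if $S'\subseteq S$ and both $S'(G)$ and $S(G)$ are clusters, then one can add the operations of $S\setminus S'$ to $S'$ in a suitable order while staying inside the class of clusters at each step. This needs a short lemma: intuitively, one should add edges ``component by component'' — first perform within a pair of $S(G)$-components all deletions that split off a sub-cluster, then perform the additions that merge pieces into the final cliques — verifying that at no point a $P_3$ (path on three vertices) is created prematurely. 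Concretely, pick a minimal solution $S_1$ contained in $S$ (every solution contains a minimal one), and order $S\setminus S_1=\{t_1,\dots,t_m\}$ so that $S_1\cup\{t_1,\dots,t_j\}$ is a cluster for each $j$; I expect such an ordering always exists by arguing about the equivalence-relation structure of clusters, perhaps by induction on $m$ choosing $t_j$ to be an operation that is ``safe'' with respect to the current cluster. Once this ordering lemma is in place, setting $S_i=S_1\cup\{t_1,\dots,t_{i-1}\}$ witnesses \Cref{def:nbf}(\ref{item:nbf cover}), and \Cref{thm:Nb in FPT -> allMP in delayFPT} then yields $\allCE\in\delayFPT$.
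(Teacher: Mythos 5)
Your overall framing (apply \Cref{thm:Nb in FPT -> allMP in delayFPT}, seed with the minimal solutions from \Cref{lem:minCE-totalfpt}) matches the paper, but your neighbourhood function is wrong: the ``ordering lemma'' on which your coverage argument rests is false, so condition~(\ref{item:nbf cover}) of \Cref{def:nbf} fails and your algorithm misses solutions. Concretely, let $G$ consist of an edge $uv$ and an isolated vertex $w$. Then $G$ is already a cluster, so the unique minimal solution is $\emptyset$, yet $S=\{\mathrm{add}(u,w),\mathrm{add}(v,w)\}$ is a perfectly good solution for $k\ge 2$ (it turns $G$ into a triangle). Under your definition, the only candidates for the first step from $\emptyset$ are single operations $t$ with $\{t\}(G)$ a cluster; but adding either $uw$ or $vw$ alone creates a $P_3$, so neither one-element subset of $S$ is a solution and $S$ is unreachable by your chains. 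More generally, two cluster graphs whose edit sets are nested by inclusion need not be connected by a chain of clusters differing in one edge at a time: merging a clique of size $i$ with one of size $j$ forces all $i\cdot j$ edges to appear before the graph is a cluster again, and every proper intermediate stage contains an induced $P_3$. Your proposed lemma already fails for $S'=\emptyset$ in the example above.

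The paper sidesteps exactly this obstacle by making the neighbourhood steps \emph{multi-edge} jumps: $\nbf{\CE}((G,k),S)$ consists of the consistent supersets of $S$ (of size at most $k$) obtained by merging two entire cliques of $S(G)$ (adding all required cross edges at once) or splitting one clique into two parts (removing all required edges at once). Each such step lands directly on another cluster, so every solution is reachable from a minimal one, and the steps remain $\FPT$-enumerable because a merge of cliques of sizes $i,j$ costs $i\cdot j\le k-|S|$ added edges and a split of a clique of size $\ell$ costs at least $\ell-1\le k-|S|$ deleted edges, bounding the cliques that need to be considered. If you want to salvage a one-operation-at-a-time scheme, you would have to drop the requirement that intermediate sets be solutions (as the paper's generic \Cref{thm:main} effectively does by re-running the minimal-solution enumerator on the modified instance), rather than hope for an ordering of $S\setminus S'$ that keeps every prefix a cluster.
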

\begin{proof} 
According to \Cref{thm:Nb in FPT -> allMP in delayFPT} we only have to provide an appropriate neighbourhood function. In a first phase, given an instance $(G,k)$, one can compute all minimal solutions, so defining $\calN((G,k),\seed)$. According to \Cref{lem:minCE-totalfpt} this is computable in $\FPT$. Now, observe that any solution can be obtained from a minimal one by iteratively applying one of the the two following operations: $(i)$ merging two cliques, or $(ii)$ cutting a clique into two parts. Let us take a closer look at these two operations. Given an instance $(G,k)$ and a solution $S$, define $\nb{}((G,k),S)$ as the set of all consistent modification sets of size at most $k$ obtained from $S$ in either merging two existing cliques (that is in adding the required edges), or splitting an existing clique into two parts (that is in removing the required edges). From the above observation it follows that property (4) in \Cref{def:nbf} is satisfied. Moreover all solutions in $\nb{}((G,k),S)$ are supersets 
of $S$, hence property (3) in \Cref{def:nbf} is satisfied as well. Therefore $\nb{}((G,k),S)$ is indeed a neighbourhood function. Moreover, observe that:

\begin{enumerate}
\item Merging two cliques $C$ and $C'$ of size $i$ and $j$ requires the introduction of $i\cdot j$ new edges. Therefore, only pairs of cliques such that $i\cdot j\leq k$ (and even $i\cdot j\leq k-|S|$) have to be considered. 
\item Cutting a clique of size $\ell$ into two cliques requires the deletion of at least $\ell-1$ edges. Therefore only cliques of size $\le k$ (and even $\ell\leq k-|S|$) have to be considered. 
\end{enumerate}
As a consequence the neighbourhood function $\nb{}((G,k),S)$ is computable in $\FPT$, thus concluding the proof.
\end{proof}

However, as discussed in the next section, finding an appropriate neighbourhood function is not always so easy.

\section{A Generic Enumeration Algorithm for the Modification Problem}

We aim to provide a generic algorithm for the problem \allMP. According to \Cref{thm:Nb in FPT -> allMP in delayFPT} our main goal is to provide a strategy to obtain an $\FPT$-computable neighbourhood function. In order to do so let us first start with a specific example.
\medskip

A \emph{chord} in a graph $G=(V,E)$ is an edge between to vertices of a cycle $C$ in $G$ which is not part of $C$.
A given graph $G=(V,E)$ is \emph{chordal} (or \emph{triangular}) if each of its induced cycles of 4 or more nodes has a chord. 

\decisionproblem{\CC}
{$(G,k)$, where $G$ is an undirected graph and $k\in\N$.}
{$k$}
{Does there exists a set of at most $k$ edges such that adding this set of edges to $G$ makes it chordal?}

Yannakakis showed that the corresponding decision problem is $\NP$-complete \cite{yannakakis81}. Kaplan et.\ al.\ \cite{kashta99}, and independently Cai \cite{cai96} have shown that the parameterized problem is in $\FPT$.
For this problem, a solution is a set of edges which have to be added to the graph to make the graph chordal, and we can define as previously the two enumeration problem $\minCC$ and $\allCC$.

Observe that in this special case of the modification problem the underlying property $\calP$, ``to be chordal'', does not have a finite forbidden set characterization (since cycles of any length can be problematic). However, one can efficiently enumerate all minimal solutions as well. 

\begin{lemma}\label[lemma]{prop:minCC}
 $\minCC\in\totalFPT$.
\end{lemma}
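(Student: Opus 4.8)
\textbf{Proof plan for \Cref{prop:minCC}.}

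The plan is to show that all inclusion-minimal chordal completions of size at most $k$ can be enumerated in $\totalFPT$ by a bounded search tree, in the very same spirit as \Cref{prop:minMP}, the only twist being that ``chordal'' has no finite forbidden set characterization. The key observation replacing the finite-pattern search is the classical fact that a non-chordal graph always contains a \emph{chordless} (induced) cycle, and that such a cycle, if present, can be found in polynomial time. So the search-tree algorithm will be: given the current completion set $S$, check whether $S(G)$ is chordal; if so, record $S$ as a (candidate) solution; if $|S|=k$, abort this branch; otherwise find a chordless cycle $v_1,\dots,v_\ell$ in $S(G)$ with $\ell\ge 4$ and branch over adding one of the chords of this cycle.

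First I would argue correctness of the branching. The essential point is that any chordal completion $T\supseteq S$ of $G$ must, in particular, triangulate the chordless cycle $v_1 v_2\cdots v_\ell$; hence $T$ must contain at least one chord $\{v_i,v_j\}$ of that cycle. This shows that every minimal solution extending the current $S$ lies in one of the branches, so exhaustive exploration of the tree visits every minimal solution. Second, I would bound the tree. A chordless cycle on $\ell$ vertices has $\binom{\ell}{2}-\ell < \ell^2$ chords, but $\ell$ can be as large as $|V|$, so naively the branching factor is polynomial in $|V|$ rather than a constant. This is the one place where extra care is needed: with branching factor $\mathrm{poly}(|V|)$ and depth $k$ the tree has size $|V|^{O(k)}$, which is \emph{not} $\totalFPT$. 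The standard fix, due to Cai and to Kaplan et al., is to branch only over chords incident to a fixed vertex of a \emph{shortest} chordless cycle: a shortest chordless cycle of length $\ell$ has the property that fixing one vertex $v_1$, any triangulation must add an edge from $v_1$ to one of the $\ell-3$ non-neighbours of $v_1$ on the cycle, and one can further show $\ell$ may be taken bounded (length $\ge \ell$ forces, after adding such a chord, a shorter chordless cycle), so the branching factor becomes a constant and the depth is $k$, giving a search tree of size $c^k$ for an absolute constant $c$.

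The hard part will thus be this bookkeeping: making precise the branching rule on a shortest chordless cycle so that (i) correctness is preserved — every minimal solution is still reached — and (ii) the branching factor is an absolute constant independent of $|V|$, so that the tree has $\totalFPT$ size $c^k$. This is exactly the content of Cai's $\FPT$ algorithm for \CC\ cited above (\cite{cai96}, and \cite{kashta99}), which I would invoke; the only addition here over the decision algorithm is that instead of halting at the first completion found, we let the search tree run to completion and, as in \Cref{alg:bst-gp}, insert each discovered solution into a priority queue ordered by size, then pop the queue to output. Since finding a shortest chordless cycle and testing chordality are polynomial-time, each of the $c^k$ nodes costs $\mathrm{poly}(|x|)$ plus a priority-queue operation of cost $\mathrm{poly}(|x|)$, yielding total running time $c^k\cdot\mathrm{poly}(|x|)$, i.e.\ $\totalFPT$. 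One should also note that the search tree produces all solutions of size $\le k$ that are ``cycle-triangulating'', among which the inclusion-minimal ones certainly appear; a minimal solution $S$ is output because along the branch that adds precisely the edges of $S$ one by one (in some order), each intermediate set is a proper subset of $S$ hence not yet chordal, so the branch does not terminate early, and it reaches $S$ where $S(G)\models\calP$ and $S$ gets enqueued. Hence $\minCC\in\totalFPT$.
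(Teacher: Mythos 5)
Your proposal takes essentially the same route as the paper: the paper's proof simply invokes Kaplan et al.\ \cite[Thm.~2.4]{kashta99}, which outputs all minimal $k$-triangulations in time $O(2^{4k}\cdot|E|)$, and then sorts by size; you reconstruct the bounded-search-tree mechanics behind that citation and explicitly defer the hard bookkeeping to the same references, so the argument goes through. Two details in your expository sketch are inaccurate, though neither breaks the proof since you invoke \cite{cai96,kashta99} for exactly that step: (i) a triangulation of a chordless cycle need \emph{not} contain a chord incident to a fixed vertex $v_1$ (in the $4$-cycle $v_1v_2v_3v_4$ the single chord $\{v_2,v_4\}$ suffices), so the correct branching rule must also include the ``ear'' chord joining the two cycle-neighbours of $v_1$; (ii) the branching factor does not become an absolute constant but rather a function of $k$ --- a chordless cycle of length $\ell$ requires $\ell-3$ chords, so cycles of length greater than $k+3$ are rejected outright and the factor is bounded in terms of $k$, which still yields a tree of size $f(k)$ and hence $\totalFPT$.
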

\begin{proof}
 We say a \emph{$k$-triangulation} of a given graph $G=(V,E)$ is a set of edges $E'$ such that $G=(V,E\cup E')$ is triangular and $|E'|\leq k$. Kaplan et.\ al.\ have shown that all minimal $k$-triangulations can be output in time $O(2^{4k}\cdot|E|)$ for a given graph $G=(V,E)$ and $k\in\N$ \cite[Thm.~2.4]{kashta99}. Afterwards, of course, with a simple sorting algorithm we can output the minimal solutions by non-decreasing size in $\totalFPT$ time.
\end{proof}

Further we will prove that \emph{all} solutions can be output ordered by non-decreasing size with delay $\FPT$.

\begin{theorem}\label{prop:allCC}
 $\allCC\in\delayFPT$.
\end{theorem}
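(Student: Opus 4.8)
The plan is to apply \Cref{thm:Nb in FPT -> allMP in delayFPT}: it suffices to exhibit an $\FPT$-computable neighbourhood function $\nbf\calM$ for $\calM = \CC$. The initial phase is handled by \Cref{prop:minCC}: set $\nbf{}((G,k),\seed)$ to be the set of all minimal solutions (minimal $k$-triangulations), which is computable in $\totalFPT$, hence in particular in $\FPT$ (the number of minimal solutions is bounded by the running-time bound $O(2^{4k}\cdot|E|)$). Note here $\Sol((G,k))=\emptyset$ exactly when there is no $k$-triangulation, in which case this initial set is empty, so condition~(\ref{item:nbf init}) of \Cref{def:nbf} holds.

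The real work is defining $\nbf{}((G,k),S)$ for a non-minimal solution $S$ and verifying the covering condition~(\ref{item:nbf cover}). Unlike the cluster-editing case, where a solution is obtained from a minimal one by a sequence of local clique merges/splits, here a solution $S$ (a set of at most $k$ fill edges making $G+S$ chordal) need not contain any minimal solution as a subset, so we cannot simply "grow" minimal solutions by adding edges. The natural fix is to make the neighbourhood function add \emph{one} fill edge at a time and re-enumerate: define $\nbf{}((G,k),S)$ to be the set of all $S' = S \cup \{e\}$ where $e$ is a non-edge of $G+S$ with $|S'|\le k$. Then every $S'\in\nbf{}((G,k),S)$ is a superset of $S$ of strictly larger size, giving condition~(\ref{item:nbf extend}); and $G+S'$ is still chordal since adding edges to a chordal graph keeps it chordal, so $S'$ is indeed a solution and $\nbf{}$ maps into $\calP(\AllSol)$. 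This function is trivially $\FPT$-computable: there are at most $\binom{|V|}{2}$ candidate edges $e$, each check "is $e$ a non-edge" is polynomial, and we need not even test chordality. For the covering condition~(\ref{item:nbf cover}), take any solution $S$ with $|S|=s\le k$. Among the minimal solutions $S_{\min}$ of $(G,k)$ there is at least one with $S_{\min}\subseteq S$ — indeed, restricting $S$ to any inclusion-minimal subset $S'\subseteq S$ with $G+S'$ chordal yields a minimal solution of size $\le k$, and it is output in phase one, i.e.\ $S_{\min}=S'\in\nbf{}((G,k),\seed)$. Now order the edges of $S\setminus S_{\min}$ arbitrarily as $e_1,\dots,e_{s-|S_{\min}|}$ and set $S_1=S_{\min}$, $S_{i+1}=S_i\cup\{e_i\}$. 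Each $S_i$ is a solution (chordal, size $\le s\le k$), $S_1\in\nbf{}((G,k),\seed)$, and $S_{i+1}\in\nbf{}((G,k),S_i)$ by construction, with the last term equal to $S$. Hence condition~(\ref{item:nbf cover}) holds, $\nbf{}$ is a genuine $\FPT$-computable neighbourhood function, and \Cref{thm:Nb in FPT -> allMP in delayFPT} yields $\allCC\in\delayFPT$.

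The step I expect to be the crux — and the one a careful write-up must not skip — is the observation that every solution of $(G,k)$ contains a minimal solution of size $\le k$ as a subset. This is what licenses "seeding" the iteration from the minimal solutions enumerated in phase one; it is true here because chordality is preserved under edge additions (monotone in the fill set), so any subset witnessing chordality can be shrunk to an inclusion-minimal one without exceeding the budget $k$. For general modification problems this monotonicity can fail (e.g.\ when vertex deletions are allowed), which is exactly why the following section develops a more robust strategy; but for \CC, where only edge additions are permitted, the argument above is clean and self-contained.
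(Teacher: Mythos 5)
There is a genuine gap, and it sits exactly at the step you flag as the crux --- but the problem is worse than a skipped observation: the lemma you rely on, ``adding edges to a chordal graph keeps it chordal,'' is false. Take $G$ to be the path $a\text{--}b\text{--}c\text{--}d$, which is chordal; adding the single edge $ad$ produces the chordless cycle $C_{4}$. This breaks your argument in two places. First, your neighbourhood $\nb{}((G,k),S)=\{S\cup\{e\}\}$ may contain sets $S'$ for which $G+S'$ is not chordal, i.e.\ non-solutions; since \Cref{alg:delayfpt-withNBF} outputs everything inserted into the queue, the resulting algorithm would emit non-solutions (in the example, $\{ad\}$ would be output for the instance $(P_{4},2)$), and the map fails to land in $\calP(\AllSol)$ as \Cref{def:nbf} requires. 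Second, the covering chain collapses: with $S_{\min}=\emptyset$ and $S=\{ac,ad\}$ (a genuine solution for $k=2$), the ordering that adds $ad$ first passes through the non-solution $\{ad\}$, so ``order the edges arbitrarily'' does not work, and condition~(\ref{item:nbf cover}) --- which requires all intermediate $S_{i}$ to be solutions --- is not established. (Your claim that every solution contains a minimal solution is correct, but for the elementary reason that an inclusion-minimal chordality-witnessing subset of $S$ has size at most $|S|\le k$; it does not need, and cannot use, monotonicity under edge addition.)

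Your one-edge-at-a-time strategy can in fact be repaired, but only by invoking a non-trivial theorem: by the Rose--Tarjan--Lueker characterization of minimal triangulations, every non-minimal chordal completion $S$ contains a fill edge $e$ such that $S\setminus\{e\}$ is still a chordal completion, so iterating downward from $S$ yields \emph{some} minimal solution and \emph{some} ordering of the remaining fill edges all of whose intermediate sets are chordal; one would then also have to restrict $\nb{}((G,k),S)$ to those $S\cup\{e\}$ that are themselves solutions (a polynomial-time chordality test per candidate). The paper avoids the issue entirely: it defines the neighbourhood of $S$ as the unions of $S\cup\{(u,v)\}$ with all minimal $(k-|S|-1)$-triangulations of the augmented graph $(V,E\cup S\cup\{(u,v)\})$ (\Cref{alg:allCC}), so every element produced is a solution by construction, and the covering condition follows by the induction carried out in general in \Cref{thm:main}. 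As written, your proof does not go through.
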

\begin{proof}
The method is described in \Cref{alg:allCC}, it is based on \Cref{thm:Nb in FPT -> allMP in delayFPT} and follows the scheme provided in the proof of \Cref{prop:allCE}. 
As in the proof of \Cref{prop:allCE}, given an instance $(G,k)$ we start with the set of all its minimal solutions, which is computable in $\FPT$ according to \Cref{prop:minCC}. Next, given a solution $S$ we have to define the neighbour solutions. This neighbourhood cannot be obtained so easily as in the \CE problem by simple modifications of $S$. The idea is to take as a neighbourhood for $S$ all minimal solutions among the ones that are a superset of $S$. This can be computed in $\FPT$ since the minimal chordal completions of the original graph augmented by $S$ and any other edge can be computed in $\FPT$ according to \Cref{prop:minCC} (see lines~\ref{line:6}--\ref{line:7} of \Cref{alg:allCC}). The fact that the neighbourhood function so defined satisfies condition~(\ref{item:nbf cover}) in \Cref{def:nbf} can be proven by induction.
\end{proof}

\begin{algorithm} 
 \Input{$(G,k)$, where $G=(V,E)$ is an undirected graph and $k\in\N$}
 insert all minimal $k$-triangulations of $G$ into the priority $Q$ (ordered by size)\;
 \While{$Q$ is not empty}{
  extract the minimum solution $S$ from the queue $Q$\;
  \textbf{output} $S$\;
  \ForAll{$u,v\in V$ with $\textrm{addEdge}(u,v)\notin S$\label{line:6}}{
   insert the unions of $S$ with all minimal $(k-|S|-1)$-triangulations of $(V,E\cup S\cup\{(u,v)\})$ into $Q$\;\label{line:7}
  }
 }
 \caption{$\delayFPT$ algorithm for $\allCC$}\label[algorithm]{alg:allCC}
\end{algorithm}

This idea can easily be generalized to any graph modification problem. Thus we get the following theorem.

\begin{theorem}\label[theorem]{thm:main}
 Let $\calM$ be a graph modification problem. If $\minM\in\totalFPT$ then $\allM\in\delayFPT$.
\end{theorem}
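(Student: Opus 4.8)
The plan is to mimic the proofs of \Cref{prop:allCE} and \Cref{prop:allCC}, extracting the common argument into a single construction of an $\FPT$-computable neighbourhood function, and then invoking \Cref{thm:Nb in FPT -> allMP in delayFPT}. Concretely, given a graph modification problem $\calM$ with $\minM\in\totalFPT$, let $\calB$ denote the $\totalFPT$ algorithm enumerating all minimal solutions, running in time $t(k)\cdot\mathrm{poly}(|x|)$. I would define the neighbourhood function $\nbf\calM$ as follows: for an instance $x=(G,k)$, set $\nbf\calM(x,\seed)$ to be the set of all minimal solutions of $x$ (computed by $\calB$); and for a solution $S\in\Sol_\calM(x)$, set $\nbf\calM(x,S)$ to be the set of all solutions $S'$ of $x$ that are of the form $S' = S\cup\{t\}\cup T$, where $t$ is a single graph operation consistent with $S$ (not already forcing $S'$ to exceed size $k$), and $T$ is a minimal solution of the modified instance $(t(S(G)),\,k-|S|-1)$ lifted back through the operations — that is, $S'$ ranges over the minimal solutions of $(G,k)$ that strictly contain $S$. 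The point of inserting one extra operation $t$ before re-running $\calB$ is exactly as in \Cref{alg:allCC}: it guarantees that the solutions returned are strictly larger than $S$, so condition~(\ref{item:nbf extend}) of \Cref{def:nbf} holds, while still being minimal relative to containing $S\cup\{t\}$.

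The key steps, in order, are: (1) verify $\nbf\calM$ is well-defined and total on $\Sol_\calM(x)\cup\{\seed\}$, and that $\nbf\calM(x,\seed)=\emptyset$ exactly when $\Sol_\calM(x)=\emptyset$ (condition~(\ref{item:nbf init})), which is immediate since a minimal solution exists iff a solution exists; (2) verify condition~(\ref{item:nbf extend}), which holds by the forced extra operation $t$; (3) verify the covering condition~(\ref{item:nbf cover}) by induction on $|S|$ — if $S$ is minimal it lies in $\nbf\calM(x,\seed)$ and we are done with $p=1$; otherwise pick any $S_{p-1}\subsetneq S$ that is minimal among solutions contained in $S$ (obtained by removing operations from $S$ one at a time while remaining a solution), note $S$ is then a minimal solution containing $S_{p-1}\cup\{t\}$ for some $t\in S\setminus S_{p-1}$ — wait, one must be slightly careful here, so the real argument is: let $S_{p-1}$ be a solution with $S_{p-1}\subsetneq S$ chosen so that $S$ is a \emph{minimal} solution properly containing $S_{p-1}$ (such $S_{p-1}$ exists, e.g. any maximal proper subsolution), then $S\in\nbf\calM(x,S_{p-1})$ and $|S_{p-1}|<|S|$, so we conclude by the induction hypothesis applied to $S_{p-1}$; (4) verify $\FPT$-computability: $\nbf\calM(x,\seed)$ costs one call to $\calB$, and $\nbf\calM(x,S)$ costs at most $|V|^2 + |V| \cdot \binom{|V|}{2} = \mathrm{poly}(|x|)$ many calls to $\calB$ (one per choice of the extra operation $t$), each in $\totalFPT$, hence the total is $\FPT$; (5) conclude by \Cref{thm:Nb in FPT -> allMP in delayFPT}.

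The main obstacle I anticipate is making step~(3), the covering condition, completely rigorous in the general graph-modification setting. In the concrete \CC case one works with edge additions only, where ``minimal solution properly containing $S_{p-1}$'' is transparent; for the general problem the operations include vertex removals and edge deletions, and the re-running of $\minM$ on the modified instance $(t(S(G)), k-|S|-1)$ must be correctly related back to solutions of the original instance $(G,k)$ containing $S\cup\{t\}$. One needs the (essentially bookkeeping) fact that minimal solutions of the residual instance, when composed with the already-applied operations $S\cup\{t\}$, are precisely the minimal solutions of $(G,k)$ that properly contain $S\cup\{t\}$ and have size $\le k$ — this requires consistency of the combined operation set and a monotonicity observation, and is the only place where the argument is more than a direct citation of the two worked examples. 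Everything else (the priority-queue bookkeeping, the order of output, absence of duplicates, and the delay bound) is inherited verbatim from \Cref{thm:Nb in FPT -> allMP in delayFPT}.
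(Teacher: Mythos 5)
Your proposal is correct and follows essentially the same route as the paper: the same neighbourhood function (seed maps to all minimal solutions; the neighbours of $S$ are the sets $S\cup\{t\}\cup T$ for each additional operation $t$ and each minimal solution $T$ of the residual instance with budget $k-|S|-1$, subject to consistency), the same downward induction via a maximal proper subsolution for the covering condition, and the same conclusion via \Cref{thm:Nb in FPT -> allMP in delayFPT}. The ``bookkeeping'' concern you flag is handled in the paper simply by the explicit consistency check before adding $S\cup S'\cup\{t\}$ to the neighbourhood, so no further argument is needed there.
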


\begin{proof}
Let $\mathcal A$ be an algorithm showing $\minM\in\totalFPT$. Because of \Cref{thm:Nb in FPT -> allMP in delayFPT}, it is sufficient to build an \FPT neighbourhood function for $\calM$. For an instance $(G,k)$ of ${\calM}$ and for $S\in\Sol_{{\calM}}(G,k)\cup\{\seed\}$, we define $\nbf{{\calM}}((G,k),S)$ as the result of \Cref{alg:buildingNbgh}.

\begin{algorithm}
\lIf{$S=\seed$\label{compNbghLine1}}{\textbf{return} $\mathcal A(G,k)$\;}
	{
	$\res=\emptyset$ \;
	\ForAll{graph operations $t$}{
		\ForAll{$S'\in\mathcal A((S\cup\{t\})(G),k-|S|-1)$\label{compNbghLine4}}{
			\lIf{$S\cup S'\cup\{t\}$ is consistent}{
			$\res=\res\cup\{S\cup S'\cup\{t\}\}$\;\label{compNbghLine5}
			}
		}
	}
	\textbf{return }$\res$\;
	}
\caption{Procedure for computing $\nbf{{\calM}}((G,k),S)$}\label[algorithm]{alg:buildingNbgh}
\end{algorithm}

The function $\nbf{{\calM}}$ thus defined clearly fulfills conditions~\ref{item:nbf init} and \ref{item:nbf extend} of \Cref{def:nbf}. We prove by induction that it also satisfies condition~\ref{item:nbf cover} (that is, each solution $T$ of size $k$ comes with a sequence $T_1,\dots,T_p=T$ such that $T_{1}\in\nbf{\calM}((G,k),\seed)$ and $T_{i+1}\in\nbf{\calM}((G,k),T_{i})$ for each $i$). 
If $T$ is a minimal solution for $(G,k)$, then $T\in\nbf{{\calM}}((G,k),\seed)$ and the expected sequence $(T_{i})$ reduces to $T_{1}=T$. 
Otherwise, there exist an $S\in\Sol_{\calM}(G,k)$ and a non-empty set of transformations, say $S'\cup\{t\}$, such that $T=S\cup S'\cup\{t\}$ and there is no solution for $G$ between $S$ and $S\cup S'\cup\{t\}$. 
This entails that $S'$ is a minimal solution for $\big((S\cup\{t\})(G),k-|S|-1\big)$ and hence $T\in\nbf{{\calM}}((G,k),S)$ (see lines~\ref{compNbghLine4}--\ref{compNbghLine5} of \Cref{alg:buildingNbgh}). 
The conclusion follows from the induction hypothesis that guarantees the existence of solutions $S_1,\dots,S_q$ such that $S_{1}\in\nbf{\calM}((G,k),\seed)$, $S_{i+1}\in\nbf{\calM}((G,k),S_{i})$ and $S_{q}=S$. The expected sequence $T_{1},\dots,T_{p}$ for $T$ is nothing but $S_1,\dots,S_q,T$. To conclude, it remains to see that \Cref{alg:buildingNbgh} is \FPT which follows from $\minM\in\totalFPT$ (lines~\ref{compNbghLine1} and \ref{compNbghLine4}), and the fact that there are only polynomial many graph operations which have to be considered.
\end{proof}
%
%
%
According to \Cref{prop:minMP}, we get the following result as an immediate corollary.

\begin{corollary}
Let $\calM_\calP$ be a graph modification problem. If $\mathcal P$ has a finite forbidden set characterization then $\allMP\in\delayFPT$.
\end{corollary}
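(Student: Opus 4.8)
The plan is simply to chain together the two results already established. First I would invoke \Cref{prop:minMP}: since $\mathcal P$ has a finite forbidden set characterization, the enumeration problem $\minMP$ — outputting all minimal solutions by non-decreasing size — lies in $\totalFPT$, via the bounded search tree algorithm of \Cref{alg:bst-gp}. This is exactly the hypothesis required by the generic reduction.

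Then I would apply \Cref{thm:main} with $\calM := \calM_\calP$. Since $\minM = \minMP \in \totalFPT$, that theorem yields $\allM = \allMP \in \delayFPT$. Concretely, \Cref{thm:main} uses the $\totalFPT$ algorithm for $\minMP$ as a subroutine (inside \Cref{alg:buildingNbgh}) to build an $\FPT$-computable neighbourhood function $\nbf{\calM_\calP}$, and then \Cref{thm:Nb in FPT -> allMP in delayFPT} turns this neighbourhood function into a $\delayFPT$ enumeration algorithm (\Cref{alg:delayfpt-withNBF}) that, driven by a priority queue ordered by size, outputs all solutions in the prescribed non-decreasing order.

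The only point worth a glance is that the two statements compose without friction: \Cref{prop:minMP} produces precisely a $\totalFPT$ algorithm for $\minMP$, which is exactly the input format expected by \Cref{thm:main}, so nothing further needs to be verified. Hence there is no real obstacle here — the corollary is immediate from the machinery already developed. If one wanted to be meticulous, one would note only that when composing the bounds the computable functions and polynomials multiply but stay of the required form $f(k)\cdot\mathrm{poly}(|x|)$, which is routine.
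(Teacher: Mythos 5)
Your proof is correct and is exactly the paper's argument: the corollary is obtained by combining \Cref{prop:minMP} (finite forbidden set characterization implies $\minMP\in\totalFPT$) with \Cref{thm:main}. Nothing further is needed.
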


Graph modification problems form a rich class of graph-theoretic algorithmic problems. Our results show that as soon as such a problem can be characterized by a finite forbidden set, the corresponding enumeration problem is in $\delayFPT$. The same holds if a characterization by a finite set of forbidden pattern is not given but the task to produce all minimal solutions is in $\totalFPT$ by some other means. We only give one further example in this paper:
The \problemFont{Triangle-Deletion} problem asks the question whether a given graph can be transformed into a triangle-free graph by deletion of at most $k$ vertices. Forbidden patterns are obviously just triangles.

\begin{corollary}
$\problemFont{All-Triangle-Deletion}\in\delayFPT$.
\end{corollary}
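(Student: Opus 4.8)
The plan is to derive this as an immediate instance of the general machinery developed in the preceding section. Specifically, \problemFont{Triangle-Deletion} is a graph modification problem in which the only allowed operation is vertex deletion and the target property $\mathcal P$ is ``being triangle-free''. This property has a finite forbidden set characterization: the single forbidden induced pattern is the triangle $K_3$, since a graph is triangle-free if and only if it contains no $K_3$ as an (induced) subgraph, and a single graph on three vertices is certainly a finite set. Hence $\mathcal P$ satisfies the hypothesis of the corollary just above (and of \Cref{prop:minMP}), and we may quote the corollary directly.

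The key steps, in order, are: first, observe that $\problemFont{Triangle-Deletion}$ fits \Cref{def:para-enum-pb} as a graph modification problem $\calM_\calP$ with parameter $k$; second, exhibit the finite forbidden set $\mathcal F=\{K_3\}$ and note $|\mathcal F|=1<\infty$; third, invoke \Cref{prop:minMP} to conclude $\minM\in\totalFPT$ via the bounded search tree of \Cref{alg:bst-gp} — here the branching width is bounded because each detected triangle gives at most three candidate vertex deletions, and the depth is bounded by $k$; fourth, apply \Cref{thm:main} (equivalently, the corollary immediately preceding this statement) to obtain $\allM\in\delayFPT$, which is exactly the claim $\problemFont{All-Triangle-Deletion}\in\delayFPT$.

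There is essentially no obstacle here: the corollary just proved already covers every property with a finite forbidden set characterization, so the only thing to verify is that $\problemFont{Triangle-Deletion}$ genuinely is such a problem, which is transparent once one notes that forbidding $K_3$ as an induced subgraph is the same as forbidding it as a subgraph (a triangle is always induced on its three vertices up to extra vertices, and extra edges cannot appear among three mutually adjacent vertices). The main ``work'', such as it is, lies entirely in the earlier \Cref{prop:minMP} and \Cref{thm:main}; the present statement is a one-line application. Thus the proof reduces to: the forbidden pattern is the triangle, hence by \Cref{prop:minMP} all minimal solutions can be enumerated in $\totalFPT$, and hence by \Cref{thm:main} all solutions can be enumerated by non-decreasing size with $\FPT$ delay.
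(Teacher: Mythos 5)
Your proposal is correct and matches the paper's own argument exactly: the paper likewise observes that the forbidden patterns are just triangles and invokes the corollary that any graph modification problem whose property has a finite forbidden set characterization yields an $\allMP$ problem in $\delayFPT$ (via \Cref{prop:minMP} and \Cref{thm:main}). Your additional remark that forbidding $K_3$ as an induced subgraph coincides with forbidding it as a subgraph is a harmless elaboration of the same one-line application.
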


Analogous results hold for the modification problems for many other classes of graphs, e.\,g., line graphs, claw-free graphs, Helly circular-arc graphs, comparability graphs, etc., see \cite{brlesp88}.

\section{Various Further Examples}

In this section we will show how the algorithmic strategy that has been defined and formalized in the context of graph modification can be of use for many other problems, coming from various combinatorial frameworks.

\subsection{Closest-String}
Given a set of binary strings $I$ we want to find the string $s$ whose maximum Hamming distance to all elements in $I$ is at most $d$ for some given $d\in\N$. Frances and Litman have shown that the corresponding decision problem is \NP-hard \cite{frli97}. The problem is very important in the areas of census word analysis and computational biology. 

%
%

Given a string $w=w_{1}\cdots w_{n}$ with $w_{i}\in\{0,1\},n\in\N$, and a set $S\subseteq\{1,\dots,n\}$, $S(w)$ denotes the string obtained from $w$ in flipping the bits indicated by $S$, more formally $S(w):=S(w_{1})\cdots S(w_{n})$, where $S(w_{i})=1-w_i$ if $i\in S$ and $S(w_{i})=w_{i}$ otherwise.

The parameterized problem which will be the focus of this section is the following.
\decisionproblem
 {\CS}
 {$(s_1,\dots,s_k,n,d)$, where $s_1,\dots,s_k$ is a sequence of strings over $\{0,1\}$ of length $n\in\N$, and an integer $d\in\N$.}
 {$d$}
 {Does there exist $S\subseteq\{1,\dots,n\}$ such that $d_{H}(S(s_{1}),s_{i})\leq d$ for all $1\leq i \leq k$, where $d_{H}$ is the Hamming distance function?}

Gramm et.\ al.\ have shown that this problem is in $\FPT$ \cite{grniro03}. We are interested in the corresponding enumeration problems $\minCS$ and $\allCS$.

As $\CS$ is not a problem defined over graphs we cannot immediately use \Cref{prop:minMP} to obtain a $\totalFPT$ result for the minimal problem $\minCS$. Nevertheless we can achieve similarly this upper bound by an exhaustive examination of a bounded search tree which is constructed from the idea of Gramm et.\ al. \cite[Fig.\ 1]{grniro03}. This tree contains all minimal solutions and can be built in $\totalFPT$.

\begin{lemma}\label[lemma]{thm:min-closest-string-totalfpt}
 $\minCS\in\totalFPT$.
\end{lemma}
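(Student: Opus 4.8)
The plan is to mimic the structure of the bounded search tree algorithm of Gramm et al.\ \cite{grniro03}, but run it exhaustively so that it visits \emph{every} minimal solution rather than stopping at the first one found. Recall the key observation behind their $\FPT$ algorithm: if a candidate string $t$ (initially one of the input strings, say $s_1$) has Hamming distance $> d$ from some $s_i$, then $t$ and $s_i$ differ in at least $d+1$ positions, and any closest string must agree with at least one of $t$, $s_i$ on each such position; in fact one can argue that a closest string must differ from $t$ in at least one of any fixed set of $d+1$ positions where $t$ and $s_i$ disagree. This gives a branching rule of width $d+1$, and since each branch flips a position of the current candidate (equivalently, adds one index to the set $S$ of flipped positions), the recursion depth is bounded by $d$. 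Hence the search tree has size at most $(d+1)^{d}$.

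First I would set up the recursive procedure \textsc{CS-BST}$(s_1,\dots,s_k,n,d,S,Q)$ in direct analogy with \Cref{alg:bst-gp}: the state carries the set $S \subseteq \{1,\dots,n\}$ of positions flipped so far. At each node: if $S(s_1)$ is already within distance $d$ of all $s_i$, insert $S$ into the priority queue $Q$ (ordered by $|S|$) and halt this branch; else if $|S| = d$, halt this branch (no room to flip more bits); otherwise pick some $s_i$ with $d_H(S(s_1),s_i) > d$, choose $d+1$ positions $P = \{p_0,\dots,p_d\}$ on which $S(s_1)$ and $s_i$ disagree, and for each $p_j \in P$ not already in $S$, recurse on $S \cup \{p_j\}$. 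The top-level algorithm calls \textsc{CS-BST} with $S = \emptyset$, then repeatedly extracts and outputs the head of $Q$, so that solutions come out by non-decreasing cardinality. Correctness that \emph{every} minimal solution is reached follows from the branching invariant: if $S^\star$ is any solution (minimal or not) with $S \subsetneq S^\star$ and $d_H(S(s_1),s_i) > d$ for the chosen $i$, then $S^\star(s_1)$ agrees with $s_i$ on all but at most $d$ positions, so $S^\star$ must contain at least one element of the $(d+1)$-set $P$; thus some child extends toward $S^\star$, and by induction on $|S^\star \setminus S|$ the leaf $S^\star$ (or a subset of it that is already a solution) is visited. Since we do \emph{not} prune at the first solution but only when $S$ itself already yields a solution, every minimal solution appears as some leaf.

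For the running time: the branching width is $d+1$ and the depth is at most $d$, so the tree has at most $(d+1)^d$ nodes; at each node we compute Hamming distances and search for a witness position in time $O(k \cdot n)$, and each priority-queue operation costs $O(n \cdot \log|\AllSol|) = O(n^2)$ (the number of solutions is at most $2^n$, so $\log$ of it is $O(n)$). Multiplying, the whole enumeration runs in time $(d+1)^d \cdot \mathrm{poly}(k,n)$, which is of the form $t(d)\cdot p(|x|)$, establishing $\minCS \in \totalFPT$. One subtlety to handle carefully — the step I expect to be the main obstacle — is duplicate suppression and the guarantee that only \emph{minimal} solutions are reported: distinct leaves of the search tree may correspond to the same set $S$, and more importantly a leaf $S$ that is a solution need not be inclusion-minimal, since an ancestor might already have been a solution along a different branch order. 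The priority queue's insert operation already discards duplicates, so multiplicity is not an issue; for minimality, note that the procedure halts a branch as soon as $S(s_1)$ satisfies all constraints, so every reported $S$ has the property that no \emph{prefix} of its construction sequence was already a solution — but that does not immediately rule out a strictly smaller solution $S' \subsetneq S$ reachable by a different flip order. The clean fix is to let the search tree collect \emph{all} solutions of size $\le d$ (it does, by the covering argument above), and then in a post-processing sweep of size polynomial in the collected list — whose length is $\FPT$-bounded since there are at most $(d+1)^d$ leaves — discard any $S$ for which a proper subset also occurs, before feeding the survivors into the priority queue; this keeps everything within $\totalFPT$ and yields exactly the minimal solutions in non-decreasing size order.
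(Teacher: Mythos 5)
Your proof is correct and takes essentially the same approach as the paper, which only sketches the argument in one sentence: exhaustively examine the bounded search tree of Gramm et al.\ and output the collected minimal solutions by non-decreasing size. You additionally identify and repair a real subtlety the paper glosses over --- that a leaf reached without any earlier prefix being a solution need not be inclusion-minimal --- and your post-processing filter (every minimal solution is guaranteed to appear as a leaf, so discarding any collected set with a collected proper subset leaves exactly the minimal ones) is sound and stays within $\totalFPT$.
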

%
%
%

\begin{theorem}\label{thm:closest-string-delayfpt}
 $\allCS\in\delayFPT$.
\end{theorem}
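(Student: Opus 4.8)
The plan is to mimic the strategy of \Cref{prop:allCC} and \Cref{thm:main}, replacing graph operations with bit-flip operations on string positions. Since $\minCS\in\totalFPT$ by \Cref{thm:min-closest-string-totalfpt}, the goal is to construct an $\FPT$-computable neighbourhood function on the solution set of $\allCS$ and then invoke (an analogue of) \Cref{thm:Nb in FPT -> allMP in delayFPT}. Here a solution for an instance $(s_1,\dots,s_k,n,d)$ is a set $S\subseteq\{1,\dots,n\}$ with $d_H(S(s_1),s_i)\leq d$ for all $i$; solutions are ordered by non-decreasing cardinality, broken lexicographically into a total order as described in the preliminaries, so queue comparisons cost only $O(|x|)$.

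The first step is to set $\nbf{}(x,\seed)$ to be the set of all minimal solutions of $x$, computable in $\totalFPT$ by \Cref{thm:min-closest-string-totalfpt}. The second step is to define, for a solution $S$, the set $\nbf{}(x,S)$ consisting of all sets of the form $S\cup\{t\}\cup S'$ where $t\in\{1,\dots,n\}\setminus S$ is a fresh position and $S'$ is a minimal solution of the ``reduced'' instance obtained from the strings $\{t\}(S(s_i))$ (i.e.\ after committing to flips in $S\cup\{t\}$) with budget $d'$ appropriately reduced, keeping only those with $|S\cup\{t\}\cup S'|\leq n$ and within the Hamming bound $d$. Concretely one runs the $\minCS$ algorithm on the instance whose strings are $(S\cup\{t\})(s_1),\dots,(S\cup\{t\})(s_k)$, and whose position set is $\{1,\dots,n\}\setminus(S\cup\{t\})$; each minimal solution $S'$ there, lifted back, yields a candidate neighbour. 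Condition~(\ref{item:nbf extend}) of \Cref{def:nbf} holds since neighbours are proper supersets. Condition~(\ref{item:nbf cover}) is shown by induction on $|T|$: if $T$ is minimal it lies in $\nbf{}(x,\seed)$; otherwise pick a solution $S\subsetneq T$ maximal subject to there being no solution strictly between $S$ and $T$, and let $t$ be any element of $T\setminus S$; then $T\setminus(S\cup\{t\})$ is a minimal solution of the reduced instance, so $T\in\nbf{}(x,S)$, and the inductive sequence for $S$ extends to one for $T$. Each such computation runs in time $n\cdot t(d)\cdot\mathrm{poly}(n,k)$, hence the neighbourhood function is $\FPT$-computable, and the priority-queue driven enumeration (one extraction and $\FPT$-many insertions per output) gives $\allCS\in\delayFPT$.

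The one genuinely new point, compared with the graph case, is that \Cref{thm:Nb in FPT -> allMP in delayFPT} is stated only for graph modification problems, so I would either state a problem-agnostic version of it (the proof is verbatim the same, as it only uses the abstract properties of a neighbourhood function, the priority queue, and $\preceq$ being the non-decreasing-cardinality quasiorder refined to a total order) or simply observe that $\CS$ fits the same template with ``graph operation'' replaced by ``position flip'' and $S(G)$ replaced by $S(s_1)$. I expect the main obstacle to be purely bookkeeping: getting the budget reduction in the reduced instance exactly right so that a minimal solution of the sub-instance, when unioned with $S\cup\{t\}$, is still within Hamming distance $d$ of every $s_i$ and has size at most $n$, and conversely that every solution $T$ decomposes this way. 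Once the ``no solution strictly between $S$ and $T$'' choice is made, the decomposition argument for condition~(\ref{item:nbf cover}) is the crux, and it is the direct analogue of the argument already given in the proof of \Cref{thm:main}.
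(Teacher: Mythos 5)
Your construction is, at the level of strategy, exactly the paper's: seed the priority queue with all minimal solutions via \Cref{thm:min-closest-string-totalfpt}, and define the neighbourhood of a solution $S$ by choosing a fresh position $t$, committing to the flips in $S\cup\{t\}$, and adjoining the minimal solutions of a reduced instance with parameter $d-|S|-1$; the covering condition is then verified by the same induction as in \Cref{thm:main}. The paper's proof is precisely this one-paragraph definition of the neighbourhood function, and your observation that \Cref{thm:Nb in FPT -> allMP in delayFPT} must be read problem-agnostically is one the paper silently makes as well.

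There is, however, one concrete error in your reduced instance: you flip the positions of $S\cup\{t\}$ in \emph{all} the strings, running the $\minCS$ algorithm on $(S\cup\{t\})(s_1),\dots,(S\cup\{t\})(s_k)$, whereas only the first string may be modified (the paper's reduced instance is $(S'(s_{1}),s_2,\dots,s_{k},d-|S|-1)$). A solution flips bits of $s_1$ only, to produce the candidate centre; the strings $s_2,\dots,s_k$ are fixed reference points. Flipping position $j$ in \emph{both} the centre candidate and a reference string leaves their agreement at $j$ unchanged, so in your reduced instance the contribution of the positions of $S\cup\{t\}$ to $d_{H}$ is that of the \emph{unflipped} $s_1$ against $s_i$ -- the reduction undoes the commitment instead of recording it. Concretely, if $S\cup\{t\}$ is itself a solution of $x$, the empty set need not be a solution of your reduced instance (its distances equal $d_{H}(s_1,s_i)$, which may exceed $d$), so $S\cup\{t\}$ is never generated and condition~(\ref{item:nbf cover}) of \Cref{def:nbf} fails; your final Hamming-distance filter preserves soundness but cannot repair this loss of completeness. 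The fix is immediate -- leave $s_2,\dots,s_k$ untouched -- and with it your decomposition argument (choose $S\subsetneq T$ with no solution strictly between, observe $T\setminus(S\cup\{t\})$ is minimal for the reduced instance) coincides with the paper's.
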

\begin{proof}
The structure of the $\delayFPT$-algorithm is the same as the one developed in the proof of \Cref{thm:Nb in FPT -> allMP in delayFPT}. It uses an appropriate neighbourhood function, which as a first step provides minimal solutions and then iteratively produces all the remaining ones. A priority queue is used in order to avoid redundancy and to output the solutions in the desired order. 
Given an instance $x=(s_{1},\dots,s_{k},n,d)$ and $S\subseteq\{1,\dots,n\}$. The neighbourhood function $\nb{}(x,S)$ is then defined to be the set of the pairwise unions of all minimal solutions of $(S'(s_{1}),\dots,s_{k},d-|S|-1)$ which are disjoint from $S'$ together with $S'$, where $S':=S\cup\{i\}$ with $i\notin\{1,\dots,n\}\setminus S$.

%
%
\end{proof}

\subsection{Backdoors}
In the following, let $\calC$ be some class of CNF-formulas, and $\varphi$ be a propositional CNF formula. If $X$ is a set of propositional variables we denote with $\Theta(X)$ the set of all assignments over the variables in $X$. For some $\theta\in\Theta(X)$ the expression $\theta(\varphi)$ is the formula obtained by applying the assignment $\theta$ to $\varphi$, i.\,e., clauses with a satisfied literal are removed, and falsified literals are removed. A set $S$ of variables from $\varphi$ is a \emph{weak $\calC$-backdoor of $\varphi$} if there exists an assignment $\theta\in\Theta(S)$ such that $\theta(\varphi)\in\calC$ and $\theta(\varphi)$ is satisfiable. The set $S$ is a \emph{strong $\calC$-backdoor of $\varphi$} if for all $\theta\in\Theta(S)$ the formula $\theta(\varphi)$ is in $\calC$.

Now we can define the corresponding parameterized problems:

\decisionproblem
{Weak/strong-$\mathcal C$-Backdoors}
{A formula $\varphi$ in 3CNF, $k\in\N$.}
{$k$}
{Does there exist a weak/strong $\mathcal C$-backdoor of size $\leq k$?}

The class $\mathcal C$ is a \emph{base class} if it can be recognized in $\PTime$, satisfiability of its formulas is in $\PTime$, and the class is closed under isomorphisms w.r.t.\ variable names. We say $\mathcal C$ is \emph{clause defined} if for every CNF-formula $\varphi$ it holds: $\varphi\in\mathcal C$ iff $\{C\}\in\mathcal C$ for all clauses $C$ from $\varphi$. 
Gaspers and Szeider \cite{gasz12b} investigated a specific type of $\mathcal C$-formulas, namely the clause-defined base classes $\mathcal C$, and showed that for any such class $\mathcal C$, the detection of weak $\mathcal C$-backdoors is in $\FPT$ for input formulas in 3CNF. In the following we examine the parameterized enumeration complexity of the corresponding enumeration problems. They describe in \cite[Prop.~2]{gasz12b} that a bounded search tree technique allows to solve the detection of weak $\calC$-backdoors in $\FPT$ time. This technique results into obtaining all minimal solutions in $\FPT$ time and after the usual sorting we can enumerate them in $\totalFPT$.

\begin{lemma}
 For every clause-defined base class $\mathcal C$ and input formulas in 3CNF, $\problemFont{Min-Weak-$\mathcal C$-Backdoors}\in\totalFPT$.
\end{lemma}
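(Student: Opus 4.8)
The plan is to mirror the proofs of \Cref{prop:minMP}, \Cref{prop:minCC} and \Cref{thm:min-closest-string-totalfpt}: take the bounded search tree that Gaspers and Szeider use to decide the existence of a weak $\calC$-backdoor for input in 3CNF \cite[Prop.~2]{gasz12b}, run it \emph{exhaustively} (i.e., do not stop at the first witness found), collect the sets sitting at its leaves, discard the non-minimal ones, and finally sort the survivors by non-decreasing cardinality. The branching rule is the natural one: write $S$ for the set of variables selected so far along the current branch and $\theta_{S}$ for the partial assignment built along it. As long as the residual formula $\theta_{S}(\varphi)$ is not in $\calC$, the fact that $\calC$ is clause-defined yields a clause $C$ of $\theta_{S}(\varphi)$ with $\{C\}\notin\calC$; since $\varphi$ is in 3CNF, $C$ has at most three variables, and any weak $\calC$-backdoor must contain at least one of them (otherwise $C$ would survive unchanged and witness $\theta_{S}(\varphi)\notin\calC$). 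Hence we branch on the at most three variables of $C$ and on their two possible values, so the branching width is at most $6$; a branch is cut as soon as $|S|=k$. The recursion depth is therefore bounded by $k$ and the search tree has at most $6^{k}$ leaves. At each leaf we hold a set $S$ with $|S|\le k$ together with $\theta_{S}$ such that $\theta_{S}(\varphi)\in\calC$; as $\calC$ is a base class, both $\calC$-membership and satisfiability of $\theta_{S}(\varphi)$ can be tested in polynomial time, so we keep exactly those leaves that are genuine weak $\calC$-backdoors.

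The correctness claim to establish is that \emph{every} minimal weak $\calC$-backdoor $S^{*}$ occurs among the retained leaves. Fix a witnessing assignment $\theta^{*}\in\Theta(S^{*})$ with $\theta^{*}(\varphi)\in\calC$ and $\theta^{*}(\varphi)$ satisfiable, and follow the branch consistent with $\theta^{*}$: at every branching node the chosen bad clause $C$ has a variable in $S^{*}$, so one of the at most six children agrees with $\theta^{*}$, and we descend into it. When this branch reaches a leaf it has accumulated some $S\subseteq S^{*}$ with $\theta_{S}=\theta^{*}|_{S}$ and $\theta_{S}(\varphi)\in\calC$. Moreover $\theta_{S}(\varphi)$ is satisfiable, because extending any satisfying assignment of $\theta^{*}(\varphi)$ by $\theta^{*}|_{S^{*}\setminus S}$ satisfies $\theta_{S}(\varphi)$. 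Hence $S$ is itself a weak $\calC$-backdoor with $S\subseteq S^{*}$, and minimality of $S^{*}$ forces $S=S^{*}$; so no minimal solution is missed.

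It remains to analyse the running time. The search tree is produced in time $6^{k}\cdot\mathrm{poly}(|\varphi|)$, and it has at most $6^{k}$ retained leaves. For each such leaf $S$ we test minimality by checking, for every $S'\subsetneq S$, whether $S'$ is a weak $\calC$-backdoor, i.e., by trying all $2^{|S'|}$ assignments in $\Theta(S')$ and running the polynomial-time $\calC$-membership and satisfiability tests on each; this costs at most $2^{k}\cdot 2^{k}\cdot\mathrm{poly}(|\varphi|)$ per leaf. Sorting the remaining minimal sets by cardinality is negligible. Altogether the algorithm runs in time $t(k)\cdot p(|\varphi|)$ for a computable $t$ and a polynomial $p$, hence $\problemFont{Min-Weak-$\mathcal C$-Backdoors}\in\totalFPT$. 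The only delicate point is the correctness argument of the second paragraph --- that truncating the witnessing assignment of a minimal backdoor to a leaf still yields a weak backdoor, which then collapses to the original by minimality; once this is in place, the rest is the routine bounded-search-tree time analysis already seen in \Cref{prop:minMP}.
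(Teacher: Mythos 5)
Your proposal is correct and follows exactly the route the paper takes (which it only sketches in the prose preceding the lemma): exhaustively traverse the Gaspers--Szeider bounded search tree, branching on the at most three variables of a clause witnessing non-membership in $\calC$ together with their two truth values, filter the leaves down to the minimal weak backdoors, and sort by cardinality. Your write-up merely supplies the details the paper leaves implicit, in particular the argument that truncating a minimal backdoor's witnessing assignment at a leaf yields a weak backdoor contained in it, which minimality then collapses to the backdoor itself.
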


\begin{theorem}
 For every clause-defined base class $\mathcal C$ and input formulas in 3CNF, $\problemFont{All-Weak-$\mathcal C$-Backdoors}\in\delayFPT$.
\end{theorem}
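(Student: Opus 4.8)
The plan is to instantiate the generic priority-queue scheme of \Cref{thm:Nb in FPT -> allMP in delayFPT}, exactly as was done for \CS in \Cref{thm:closest-string-delayfpt}, so that the whole task reduces to exhibiting an $\FPT$-computable neighbourhood function $\nb{}$ on the set of weak $\mathcal{C}$-backdoors of $\varphi$ whose seed value $\nb{}((\varphi,k),\seed)$ is the set of all \emph{minimal} weak $\mathcal{C}$-backdoors. That seed is available in $\totalFPT$ by the preceding lemma (the bounded search tree of Gaspers and Szeider \cite{gasz12b}), and the key point is that the very same $\totalFPT$ subroutine, run on suitably reduced instances, also drives the extension step. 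A reduced instance here is obtained by fixing the values of finitely many variables: applying a partial assignment to a 3CNF formula only shortens its clauses, so the result is again in 3CNF and recognition and satisfiability for $\mathcal{C}$ remain polynomial; hence the preceding lemma (which is the only place where the hypothesis that $\mathcal{C}$ is a clause-defined base class is used) applies to these reduced instances verbatim.

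Concretely, let $\calA$ be the $\totalFPT$ algorithm that on input $(\psi,j)$ outputs all minimal weak $\mathcal{C}$-backdoors of $\psi$ of size at most $j$, and put $\nb{}((\varphi,k),\seed):=\calA(\varphi,k)$. For a solution $S$ (a weak $\mathcal{C}$-backdoor of $\varphi$ with $|S|\le k$), let $\nb{}((\varphi,k),S)$ consist of all sets $S\cup\{v\}\cup R$, where $v$ is a variable of $\varphi$ not in $S$, where $\rho\in\Theta(S\cup\{v\})$, and where $R$ ranges over $\calA(\rho(\varphi),\,k-|S|-1)$; this is the backdoor analogue of \Cref{alg:buildingNbgh}. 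Since the variables of $S\cup\{v\}$ no longer occur in $\rho(\varphi)$, such an $R$ is automatically disjoint from $S\cup\{v\}$, and taking the union of $\rho$ with a witnessing assignment for $R$ shows $S\cup\{v\}\cup R$ to be a weak $\mathcal{C}$-backdoor of $\varphi$ of size at most $k$, i.e.\ a solution. Conditions~(\ref{item:nbf init}) and (\ref{item:nbf extend}) of \Cref{def:nbf} are then immediate, and $\nb{}$ is $\FPT$-computable: there are at most $|\varphi|$ choices for $v$, at most $2^{k+1}$ choices for $\rho$, and each of the resulting calls to $\calA$ costs $f(k)\cdot\mathrm{poly}(|\varphi|)$.

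The main work is the covering condition~(\ref{item:nbf cover}), which I would prove by induction on $|T|$, closely following the proof of \Cref{thm:main}. If $T$ is a minimal weak $\mathcal{C}$-backdoor, then $T\in\nb{}((\varphi,k),\seed)$ and the required chain has length one. Otherwise some proper subset of $T$ is a weak $\mathcal{C}$-backdoor (and, being contained in $T$, has size $\le k$), so we may pick a solution $S$ that is \emph{maximal} among proper sub-solutions of $T$; choose $v\in T\setminus S$, let $\rho$ be the restriction to $S\cup\{v\}$ of a witnessing assignment $\theta$ for $T$, and set $R:=T\setminus(S\cup\{v\})$. Writing $\theta$ as the union of $\rho$ and its restriction to $R$ exhibits $R$ as a weak $\mathcal{C}$-backdoor of $\rho(\varphi)$; and $R$ must be inclusion-minimal there, for a proper sub-backdoor $R'\subsetneq R$ of $\rho(\varphi)$ would make $S\cup\{v\}\cup R'$ a solution strictly between $S$ and $T$, contradicting the maximality of $S$. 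Since $|R|=|T|-|S|-1\le k-|S|-1$, the set $R$ is output by $\calA(\rho(\varphi),k-|S|-1)$, so $T=S\cup\{v\}\cup R\in\nb{}((\varphi,k),S)$; prepending the chain for $S$ granted by the induction hypothesis gives the desired chain for $T$. With $\nb{}$ in hand, \Cref{alg:delayfpt-withNBF} --- whose correctness argument never uses that solutions are sets of graph operations --- outputs all weak $\mathcal{C}$-backdoors by non-decreasing size with delay $f(k)\cdot\mathrm{poly}(|\varphi|)$, establishing the theorem.
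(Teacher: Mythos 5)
Your proposal is correct and follows essentially the same route as the paper: the identical neighbourhood function (unions $S\cup\{v\}\cup R$ over all $v\notin S$, all $\rho\in\Theta(S\cup\{v\})$, and all minimal weak $\mathcal C$-backdoors $R$ of $\rho(\varphi)$ with budget $k-|S|-1$), plugged into the generic priority-queue scheme of \Cref{thm:Nb in FPT -> allMP in delayFPT}. You merely spell out the verification of the covering condition (via a maximal proper sub-solution $S$ of $T$) that the paper leaves implicit.
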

\begin{proof}
We proceed as before. Given an instance $x=(\varphi,k)$ we first compute all its minimal backdoors. Then, given some backdoor $S$ we define $\nb{}(x,S)$ as the set of the pairwise unions of all minimal weak $\calC$-backdoors of $(\theta(\varphi),k-|S|-1)$ together with $S\cup\{x_i\}$ for each $\theta\in\Theta(S\cup\{x_{i}\})$ for $x_{i}\in\textit{Vars}(\varphi)\setminus S$. Observe that there are only $\FPT$-many assignments for which the minimal weak $\calC$-backdoors have to be computed and as the satisfiability test for the formulas is in $\PTime$ this yields a $\delayFPT$ algorithm.
%
\end{proof}

Let $\varphi$ be a CNF-formula and $V\subseteq\textit{Vars}(\varphi)$ be a subset of its variables. Then $\varphi-V$ is the formula where all literals over variables from $V$ have been removed from all clauses in $\varphi$. 
Now we want to consider strong $\mathcal C$-backdoors for clause-defined base classes $\mathcal C$. Note that in this case the notion of strong $\mathcal C$-backdoors coincides with the notion of deletion $\mathcal C$-backdoors, i.\,e., a set $V\subseteq\textit{Vars}(\varphi)$ is a strong $\calC$-backdoor of $\varphi$ if and only if $\varphi-V\in\calC$. 

\begin{theorem}
 For every clause-defined base class $\mathcal C$ and input formulas in 3CNF, $\problemFont{Min-Strong-$\mathcal C$-Backdoors}\in\totalFPT$.
\end{theorem}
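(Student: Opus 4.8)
The plan is to mimic the bounded-search-tree arguments already used for $\minMP$ (\Cref{prop:minMP}) and $\minCS$ (\Cref{thm:min-closest-string-totalfpt}), exploiting the reformulation recalled just before the statement: for a clause-defined base class $\mathcal C$, a set $V\subseteq\textit{Vars}(\varphi)$ is a strong $\mathcal C$-backdoor of $\varphi$ exactly when $\varphi-V\in\mathcal C$, and by clause-definedness this holds iff every clause of $\varphi-V$ lies in $\mathcal C$. For a single clause $C$ write $C-S$ for the clause obtained by deleting all literals over variables in $S$, so that $\varphi-S$ has clause set $\{C-S : C\in\varphi\}$. Since $\mathcal C$ is a base class it is recognizable in polynomial time, hence ``$\varphi-S\in\mathcal C$'' and, for each clause, ``$C-S\in\mathcal C$'' are testable in time polynomial in $|\varphi|$.

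First I would set up the search tree. A node is labelled by a set $S\subseteq\textit{Vars}(\varphi)$ of variables already chosen for deletion; the root is labelled $\emptyset$. At a node $S$: if $\varphi-S\in\mathcal C$, record $S$ as a candidate and stop; otherwise, if $|S|=k$, stop (dead end); otherwise, since $\varphi-S\notin\mathcal C$ and $\mathcal C$ is clause-defined, some clause $C$ of $\varphi$ has $C-S\notin\mathcal C$ — pick one such $C$ and branch, creating for every variable $v$ occurring in $C-S$ a child labelled $S\cup\{v\}$. Because $\varphi$ is in 3CNF, every clause, and hence every residual clause $C-S$, contains at most three variables, so the branching degree is at most $3$; the depth is at most $k$; thus the tree has $O(3^{k})$ nodes, and the work at each node (computing $\varphi-S$, testing membership clause by clause, selecting a violated clause) is polynomial in $|\varphi|$. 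The whole traversal therefore runs in time $3^{k}\cdot\mathrm{poly}(|\varphi|)$, i.e.\ in $\FPT$, and produces at most $3^{k}$ candidate sets, each of size $\le k$ and each a strong $\mathcal C$-backdoor.

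Next I would argue completeness: every inclusion-minimal strong $\mathcal C$-backdoor $V$ with $|V|\le k$ is among the recorded candidates. This goes by induction along the branch the algorithm follows from the root. At any node $S$ with $S\subsetneq V$ we have $\varphi-S\notin\mathcal C$ (otherwise $S$ would be a strong $\mathcal C$-backdoor strictly inside $V$, contradicting minimality of $V$), so the algorithm branches on some violated clause $C$; since $V$ is a strong backdoor we have $C-V\in\mathcal C$ while $C-S\notin\mathcal C$, and as $C-V=(C-S)-(V\setminus S)$ this forces $V\setminus S$ to contain a variable $v$ of $C-S$. The child labelled $S\cup\{v\}$ is again a subset of $V$, and $|S\cup\{v\}|\le|V|\le k$, so this branch is never cut off at depth $k$ prematurely. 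Since each step strictly enlarges $S$, after at most $|V|$ steps we reach the node labelled $V$, at which point $\varphi-V\in\mathcal C$ and $V$ is recorded.

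Finally, the search tree may also record non-minimal strong backdoors, so the last step is to keep only the inclusion-minimal candidates: with at most $3^{k}$ candidates this costs $(3^{k})^{2}\cdot\mathrm{poly}(|\varphi|)$ pairwise subset tests, and then the survivors are output in non-decreasing cardinality by a single sorting pass (equivalently, by feeding them through a priority queue ordered by size, as in the other proofs). The total running time remains of the form $t(k)\cdot\mathrm{poly}(|\varphi|)$, establishing $\problemFont{Min-Strong-$\mathcal C$-Backdoors}\in\totalFPT$. The only genuinely delicate point is the completeness argument — verifying that confining the branching to the at most three variables of one arbitrarily chosen violated clause never loses a minimal backdoor (the key being that any violated clause at a node $S\subsetneq V$ is automatically ``hit'' by $V\setminus S$); the rest is the by-now routine ``bounded search tree, then prune and sort'' template used throughout the paper.
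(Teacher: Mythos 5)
Your proposal is correct and follows essentially the same route as the paper: a bounded search tree that, at each node $S$ with $\varphi-S\notin\mathcal C$, branches on the at most three variables of a violated clause, giving a tree of size $3^{k}$ with polynomial work per node, followed by pruning to minimal candidates and sorting by size. The paper's proof is a three-line sketch of exactly this; your completeness argument (any violated clause at $S\subsetneq V$ must be hit by $V\setminus S$) and the explicit pruning step are the details it leaves implicit.
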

\begin{proof}
 By the previous observation we get that we only need to branch on the variables from a clause $C\notin\calC$ and remove the corresponding variable from $\varphi$. The size of the branching tree is at most $3^k$. As for base classes the satisfiability test is in $\PTime$ this yields a $\totalFPT$ algorithm.
\end{proof}


\begin{theorem}
For every clause-defined base class $\mathcal C$ and input formulas in 3CNF, $\problemFont{All-Strong-$\mathcal C$-Backdoors}\in\delayFPT$. 
\end{theorem}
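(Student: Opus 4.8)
The plan is to reuse, almost verbatim, the generic construction of \Cref{thm:main}, with ``delete a variable'' playing the role of a graph operation, and to feed it the fact -- established in the previous theorem -- that $\problemFont{Min-Strong-$\calC$-Backdoors}\in\totalFPT$ for clause-defined base classes $\calC$ on 3CNF inputs. So, by (the obvious analogue for backdoors of) \Cref{thm:Nb in FPT -> allMP in delayFPT}, it suffices to exhibit an $\FPT$-computable neighbourhood function. Recall first the key structural fact, already noted above: since $\calC$ is clause-defined, a set $V\subseteq\textit{Vars}(\varphi)$ is a strong $\calC$-backdoor iff $\varphi-V\in\calC$; hence the solutions of an instance $(\varphi,k)$ are exactly the sets $V\subseteq\textit{Vars}(\varphi)$ with $|V|\le k$ and $\varphi-V\in\calC$. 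Note also that $\varphi-(S\cup\{v\})$ is again a 3CNF formula and $\calC$ is unchanged, so the preceding $\totalFPT$ theorem applies to every such reduced instance.

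Next I would define the neighbourhood function. On the seed, $\nb{}((\varphi,k),\seed)$ is the set of all minimal strong $\calC$-backdoors of $\varphi$ of size $\le k$, computable in $\totalFPT$ by the preceding theorem. For a solution $S$, let $\nb{}((\varphi,k),S)$ consist of all sets $S\cup\{v\}\cup S'$, where $v$ ranges over $\textit{Vars}(\varphi)\setminus S$ and $S'$ ranges over the minimal strong $\calC$-backdoors of the instance $\bigl(\varphi-(S\cup\{v\}),\,k-|S|-1\bigr)$. Since there are only linearly many choices of $v$ and each inner set is produced by one $\totalFPT$ call with parameter $k-|S|-1\le k$, the whole set $\nb{}((\varphi,k),S)$ is computable in time $f(k)\cdot\mathrm{poly}(|\varphi|)$, i.e.\ $\nb{}$ is $\FPT$-computable. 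Conditions~(\ref{item:nbf init}) and (\ref{item:nbf extend}) of \Cref{def:nbf} are immediate, because $S$ is a proper subset of every element of $\nb{}((\varphi,k),S)$ and hence sizes strictly increase.

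The real work is the covering condition~(\ref{item:nbf cover}), which I would verify by induction on $|T|$, mirroring the proof of \Cref{thm:main}. If $T$ is a minimal solution, then $T\in\nb{}((\varphi,k),\seed)$ and the required sequence is just $T_1=T$. Otherwise, among the (non-empty, finite) set of proper sub-solutions of $T$ pick one, $S$, that is inclusion-maximal, so that no solution lies strictly between $S$ and $T$; then choose any $v\in T\setminus S$ and set $S'=T\setminus(S\cup\{v\})$. Since $\varphi-(S\cup\{v\}\cup S')=\varphi-T\in\calC$ and $|S'|=|T|-|S|-1\le k-|S|-1$, the set $S'$ is a solution of $\bigl(\varphi-(S\cup\{v\}),k-|S|-1\bigr)$; it is moreover \emph{minimal} there, for if $S''\subsetneq S'$ were a smaller solution of that instance then $S\cup\{v\}\cup S''$ would be a solution of $(\varphi,k)$ strictly between $S$ and $T$, contradicting the maximality of $S$. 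Hence $T\in\nb{}((\varphi,k),S)$, and concatenating $T$ after the sequence for $S$ furnished by the induction hypothesis yields the desired sequence for $T$.

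With an $\FPT$-computable neighbourhood function in hand, the enumeration algorithm is the one of \Cref{alg:delayfpt-withNBF}: start by inserting all minimal strong $\calC$-backdoors of $\varphi$ into a priority queue ordered by cardinality (ties broken lexicographically), then repeatedly extract the minimum $S$, output it, and insert all elements of $\nb{}((\varphi,k),S)$. As in the proof of \Cref{thm:Nb in FPT -> allMP in delayFPT}, duplicate-free insertion plus the strict size increase guarantee that solutions are emitted in non-decreasing size without repetition, and the covering condition guarantees completeness; between two outputs we perform one extraction, an $\FPT$-time neighbourhood computation, and $\FPT$-many insertions, so the delay is $t(k)\cdot\mathrm{poly}(|\varphi|)$. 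The only genuinely delicate point is the minimality argument for $S'$ in the covering step; everything else is the same bookkeeping as in the graph-modification case.
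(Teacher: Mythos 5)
Your proposal is correct and follows essentially the same route as the paper: the paper's proof consists precisely of defining the neighbourhood function $\nb{}((\varphi,k),S)$ as the unions of $S\cup\{x_i\}$ (for $x_i\notin S$) with the minimal strong $\calC$-backdoors of $(\varphi-(S\cup\{x_i\}),k-|S|-1)$, exactly as you do. Your explicit verification of the covering condition (choosing an inclusion-maximal sub-solution $S$ and arguing minimality of $S'$ in the reduced instance) is left implicit in the paper, which delegates it to the generic argument of \Cref{thm:main}, but it is sound.
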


\begin{proof}
Again we focus on the neighbourhood function $\nb{}(x,S)$ for $x=(\varphi,k)$ which is defined to be the set of the pairwise unions of all minimal strong $\calC$-backdoors of $(\varphi-(S\cup\{x_{i}\}),k-|S|-1)$ together with $S\cup\{x_i\}$ for all variable $x_{i}\not\in S$.
%
\end{proof}

\subsection{Weighted Satisfiability Problems}

We consider formula classes defined over the Schaefer framework defined over a set $\Gamma$ of constraints, and investigate the problem class $\minonesSAT(\Gamma)$ 
with respect to ordering by weight of assignments. These problems have been classified without a specific ordering in \cite{cmmsv13}. Given a propositional formula $\varphi$ and an assignment $\theta$ over the variables in $\varphi$ with $\theta\models\varphi$, we say $\theta$ is \emph{minimal} if there does not exist an assignment $\theta'$ which sets strictly less variables to true than $\theta$ and $\theta'\models\varphi$. The size $|\theta|$ of $\theta$ is the number of variables it sets to true. Formally, the problems from above are defined as follows:
\enumproblem
{$\Min\minonesSAT(\Gamma)$}
{$(\varphi,k)$, a propositional $\Gamma$-formula $\varphi$, $k\in\N$.}
{$k$}
{Generate all minimal assignments $\theta$ of $\varphi$ with $|\theta|\leq k$ by non-decreasing size.}

Similarly, the problem $\All\minonesSAT(\Gamma)$ asks for all assignments $\theta$ of $\varphi$ with $|\theta|\leq k$. 

\begin{theorem}
 For all constraint languages $\Gamma$, we have: $\Min\minonesSAT(\Gamma)\in\totalFPT$ and $\All\minonesSAT(\Gamma)\in\delayFPT$.
\end{theorem}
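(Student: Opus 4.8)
The theorem splits into two independent parts, and both follow templates already used in the paper: a bounded search tree for the minimal version (exactly as for \minMP, \minCS and the minimal backdoor problems) and the neighbourhood-function machinery for the ``all'' version (as in \Cref{thm:Nb in FPT -> allMP in delayFPT} and \Cref{thm:main}).

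\emph{Part 1: $\Min\minonesSAT(\Gamma)\in\totalFPT$.} Since $\Gamma$ is a fixed finite constraint language, the maximal arity $r$ of a relation in $\Gamma$ is a constant. Given $(\varphi,k)$, I would run a bounded search tree maintaining a set $P$ of variables currently set to $\mathit{true}$ (all other variables being read as $\mathit{false}$). At a node: if the assignment described by $P$ satisfies $\varphi$, record $P$ and backtrack; if $|P|=k$, backtrack; otherwise pick a constraint $R(x_1,\dots,x_m)$ of $\varphi$ falsified by the current assignment and, for every $x_i\notin P$, recurse on $P\cup\{x_i\}$. The branching degree is at most $r$ and the depth at most $k$, so the tree has at most $r^k$ leaves and is built in time $r^k\cdot\text{poly}(|\varphi|)$. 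The point is that every subset-minimal satisfying assignment $T$ with $|T|\le k$ occurs among the recorded sets: starting from $P=\emptyset$ and, at each step, taking the branch that adds a variable of $T$, one reaches $T$ within $|T|$ steps. This is correct because the procedure only ever turns variables from $\mathit{false}$ to $\mathit{true}$: a falsified constraint $R$ whose current restriction is a tuple $t\notin R$ can only be repaired by flipping a still-$\mathit{false}$ variable of it to $\mathit{true}$, and the restriction of $T$ to that constraint is a tuple of $R$ dominating $t$ coordinatewise, hence differing from $t$ in at least one coordinate that goes from $0$ to $1$; that coordinate supplies a legitimate branch (and it cannot happen that all variables of a constraint falsified at $P\subseteq T$ already lie in $P$, as then $T$ would falsify it too).

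\emph{Part 1, filtering.} The search tree may also record non-minimal candidates — with the single parity constraint $x\oplus y\oplus z=1$ it records $\{x,y,z\}$ although $\{x\}$ is already a solution — and since a $\Gamma$-constraint need not be monotone, subset-minimality cannot be recognised by single-element deletions (in that example none of $\{x,y\},\{x,z\},\{y,z\}$ satisfies $\varphi$). I would therefore add a post-processing step discarding a recorded set $P$ as soon as some $P'\subsetneq P$ satisfies $\varphi$; this is decided by inspecting the $\le 2^{|P|}\le 2^k$ subsets of $P$. Sorting the surviving minimal solutions by non-decreasing size and outputting them completes the algorithm, with total running time $2^{O(k)}\cdot\text{poly}(|\varphi|)$; in particular $\varphi$ has at most $r^k$ minimal satisfying assignments of weight $\le k$.

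\emph{Part 2: $\All\minonesSAT(\Gamma)\in\delayFPT$.} As in \Cref{thm:Nb in FPT -> allMP in delayFPT} and the Closest-String case, it suffices to exhibit an $\FPT$-computable neighbourhood function on the set of satisfying assignments of weight $\le k$ and feed it to the generic priority-queue algorithm, which outputs the solutions in non-decreasing size, without repetition and with $\FPT$ delay. Let $\nb{}(x,\seed)$ be the set of all subset-minimal satisfying assignments of weight $\le k$ (empty iff there is none), computable in $\FPT$ by Part~1. For a solution $S$, let $\nb{}(x,S)$ be the set of \emph{immediate successors} of $S$, i.e.\ of all solutions $T\supsetneq S$ with $|T|\le k$ such that no solution $T'$ satisfies $S\subsetneq T'\subsetneq T$. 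To compute it one cannot simply substitute $x_i:=\mathit{true}$, since a partial evaluation of a $\Gamma$-constraint need not be a $\Gamma$-constraint; instead, for each variable $x_i\notin S$ one reruns the search tree of Part~1 with initial set $P=S\cup\{x_i\}$, collects the recorded sets, and keeps those $T$ with no solution strictly between $S$ and $T$ (tested on the $\le 2^k$ sets between them). There are at most $|\varphi|$ choices for $x_i$, so $\nb{}(x,S)$ is $\FPT$-computable. Conditions~\ref{item:nbf init} and~\ref{item:nbf extend} of \Cref{def:nbf} are immediate, and condition~\ref{item:nbf cover} follows by induction on $|S|$: if $S$ is subset-minimal then $S\in\nb{}(x,\seed)$ and the covering sequence is $S$ itself; otherwise pick, among all solutions properly contained in $S$, one of maximum cardinality, say $S''$ — then no solution lies strictly between $S''$ and $S$, so $S$ is an immediate successor of $S''$ and $S\in\nb{}(x,S'')$, while the induction hypothesis supplies a covering sequence for $S''$ to which $S$ is appended. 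Hence the neighbourhood function is valid and $\FPT$-computable, and the generic algorithm puts $\All\minonesSAT(\Gamma)$ in $\delayFPT$.

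\emph{Main obstacle.} The only delicate points lie in Part~1 and stem from the constraints being non-monotone: one must verify that the bounded search tree stays exhaustive for \emph{subset}-minimal solutions, and one must replace the usual single-deletion minimality test by the (still $\FPT$-sized) inspection of all $\le 2^k$ subsets. Once the minimal solutions are available, Part~2 is essentially a copy of \Cref{thm:main} and the Closest-String argument.
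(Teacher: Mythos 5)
Your proof is correct and follows the paper's decomposition: a bounded search tree yields the minimal solutions in $\totalFPT$, and a neighbourhood function fed into the generic priority-queue scheme of \Cref{thm:Nb in FPT -> allMP in delayFPT} gives $\delayFPT$ for the ``all'' variant. The difference is one of explicitness. The paper's proof is a two-sentence sketch: branch on falsified constraints for the minimal problem, then ``apply \Cref{thm:main} directly'' for the second claim. You correctly identify and repair two points that this sketch glosses over. First, the raw search tree records supersets of minimal solutions as well, and since $\Gamma$-constraints need not be monotone the usual single-element-deletion minimality test fails; your $2^{k}$-subset filter handles this at no asymptotic cost. Second, \Cref{thm:main} is formally stated only for graph modification problems, and its literal transposition (substitute $x_i:=\mathit{true}$ and recurse on the resulting instance) can leave the class of $\Gamma$-formulas; your workaround --- reseeding the search tree at $P=S\cup\{x_i\}$ and defining the neighbourhood of $S$ as its immediate successors, with the covering condition of \Cref{def:nbf} verified by choosing a maximum-cardinality solution below $S$ --- is a sound reconstruction of the intended argument. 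In short, same route, but your version actually supplies the argument that the paper only asserts.
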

\begin{proof}
 For the first claim we can simply compute the minimal assignments by a straight forward branching algorithm: start with the all $0$-assignment, then consider all unsatisfied clauses in turn and flip one of the occurring variables to true. The second claim follows by a direct application of \Cref{thm:main}.
\end{proof}

%
%

\section{Conclusion}

We presented $\FPT$-delay ordered enumeration algorithms for a large variety of problems, such as cluster editing, chordal completion, closest-string, and weak and strong backdoors. An important point of our paper is that we propose a general strategy for efficient enumeration. This is rather rare in the literature, where usually algorithms are devised individually for specific problems. In particular, our scheme yields $\delayFPT$ algorithms for all graph modification problems that are characterized by a finite set of forbidden patterns.

The focus of the present paper was on graph-theoretic problems. A point we did not consider here is the introduction of new vertices as an operation. In the full version of the paper we will see that our generic approach covers this situation as well. Furthermore, we will define general modification problems, detached from graphs. We will introduce a general notion of \emph{modification operations} and provide generic enumeration algorithms for arising problems in the world of graphs, strings, numbers, formulas, constraints, etc.

As an observation we would like to mention that the $\delayFPT$ algorithms presented in this paper require exponential space due to the inherent use of the priority queues to achieve \emph{ordered} enumeration. An interesting question is whether there is a method which requires less space but uses a comparable delay between the output of solutions and still obeys the underlying order on solutions.

\bibliographystyle{plain}
\bibliography{bst}

\end{document}